\documentclass[pdflatex,sn-arxiv,Numbered]{sn-jnl}
\usepackage{orcidlink} 

\usepackage{amsmath}
\usepackage{amssymb}
\usepackage{amsthm}
\usepackage{manyfoot}
\usepackage{xcolor}
\usepackage[algoruled,lined,algonl,procnumbered]{algorithm2e}
\usepackage{tikz}
\usepackage{xspace}
\usepackage{subcaption}
\usepackage{enumitem}
\usepackage{tabularx}

\usetikzlibrary{calc}
\usetikzlibrary{arrows.meta,shapes,decorations.pathreplacing,decorations.pathmorphing,decorations.markings,plotmarks,patterns,patterns.meta}

\newcommand{\triangularGrid}[4]{
    \newcommand\rowNum{#1} 
    \newcommand\colNum{#2} 
    \newcommand\triSide{#3} 
    \newcommand\triHeight{#4} 

    \clip (-.5*\triSide,-1) rectangle (\triSide+\triSide*\colNum -.5*\triSide,2.35*\triHeight*\rowNum) ;

    \foreach \i in {0,...,\rowNum} {
        \ifnum\i=0 
            \foreach \j in {0,...,\colNum} {
                \draw[gray!50] (0:\j*\triSide) -- ++ (-120:\triSide) -- ++ (0:\triSide) --++ (120:\triSide) --++ (0:\triSide) ;
            }
            \foreach \j in {0,...,\colNum} {
                \draw[gray!50] (0:\j*\triSide-\triSide) --++ (0:\triSide) --++ (120:\triSide) --++ (0:\triSide) --++ (-120:\triSide);
            }
        \else
            \foreach \j in {0,...,\colNum} {
                \newcommand\startingPoint{2*\i*\triHeight}
                \draw[gray!50] (90:\startingPoint) ++ (0:\j*\triSide) -- ++ (-120:\triSide) -- ++ (0:\triSide) --++ (120:\triSide) --++ (0:\triSide) ;
            }
            \foreach \j in {0,...,\colNum} {
                \newcommand\startingPoint{2*\i*\triHeight}
                \draw[gray!50] (90:\startingPoint) ++ (0:\j*\triSide-\triSide) --++ (0:\triSide) --++ (120:\triSide) --++ (0:\triSide) --++ (-120:\triSide);
            }
        \fi
    }
}

\theoremstyle{thmstyleone}%
\newtheorem{theorem}{Theorem}
\newtheorem{case}{Case}
\newtheorem{subcase}{Case}[case]
\newtheorem{observation}{Observation}
\newtheorem{remark}{Remark}
\newtheorem{claim}{Claim}
\newtheorem{lemma}{Lemma}

\newcommand{\msa}{\ensuremath{\mathsf{a}\xspace}}
\newcommand{\msb}{\ensuremath{\mathsf{b}\xspace}}
\newcommand{\msp}{\ensuremath{\mathsf{p}\xspace}}
\newcommand{\msq}{\ensuremath{\mathsf{q}\xspace}}
\newcommand{\msr}{\ensuremath{\mathsf{r}\xspace}}
\newcommand{\msx}{\ensuremath{\mathsf{x}\xspace}}

\newcommand{\cA}{\ensuremath{\mathcal{A}\xspace}}
\newcommand{\cP}{\ensuremath{\mathcal{P}\xspace}}
\newcommand{\cS}{\ensuremath{\mathcal{S}\xspace}}

\newcommand{\view}{\texttt{view}\xspace}

\begin{document}

\title[Self-Stabilising LE with Constant Memory]{Deterministic Self-Stabilising Leader Election for Programmable Matter with Constant Memory\footnote{This is an extended version of the work presented in \cite{chalopin2024stabilising}}}

\author*[1]{\fnm{Jérémie} \sur{Chalopin} \orcidlink{0000-0002-2988-8969}} \email{jeremie.chalopin@lis-lab.fr}

\author*[1]{\fnm{Shantanu} \sur{Das} \orcidlink{0000-0003-4008-2445}} \email{shantanu.das@lis-lab.fr}

\author*[1]{\fnm{Maria} \sur{Kokkou} \orcidlink{0009-0009-8892-3494}} \email{maria.kokkou@lis-lab.fr}

\affil[1]{\orgname{Aix Marseille Univ, CNRS, LIS}, \orgaddress{\street{163 Av. de Luminy}, \city{Marseille}, \postcode{13009}, \country{France}}}

\abstract{The problem of electing a unique leader is central to all distributed systems, including programmable matter systems where particles have constant size memory. In this paper, we present a silent self-stabilising, deterministic, stationary, election algorithm for particles having constant memory, assuming that the system is simply connected. Our algorithm is elegant and simple, and requires constant memory per particle. We prove that our algorithm always stabilises to a configuration with a unique leader, under a  daemon satisfying some fairness guarantees ({\it Gouda fairness} \cite{Gouda2001theory}). We use the special geometric properties of programmable matter in 2D triangular grids to obtain the first self-stabilising algorithm for such systems. This result is surprising since it is known that silent self-stabilising algorithms for election in general distributed networks require $\Omega(\log{n})$ bits of memory per node, even for ring topologies \cite{DolevGS99}.}

\keywords{Leader Election, Programmable Matter, Self-Stabilisation, Simply Connected, Gouda fair Daemon, Constant Memory} 

\maketitle

\bmhead{Declarations}
The authors have no competing interests to declare that are relevant to the content of this article. All authors contributed equally to all aspects of this work. All authors read and approved the final manuscript.

\bmhead{Acknowledgements}

This work has been partially supported by the National Research Agency (Agence Nationale de la Recherche (ANR)) project DUCAT (ANR-20-CE48-0006)

\section{Introduction}

Leader election (introduced by Le Lann~\cite{LeLann77}) allows to distinguish a unique process in the system as a \textit{leader}. The leader process can then act as an initiator or a coordinator, for solving other distributed problems. Thus, election algorithms are often used as building blocks for many problems in this domain. We are interested in deterministic election algorithms that are {\em self-stabilising}.
Since the seminal work of Dijkstra~\cite{Dijkstra74}, the
self-stabilisation paradigm has been thoroughly
investigated (see~\cite{DolevBook} for a survey).  A distributed algorithm is
self-stabilising if when executed on a distributed system in an
arbitrary global initial configuration, the system eventually reaches
a legitimate configuration. Self-stabilising protocols are able to
autonomously recover from transient memory failures, without external
intervention. A self-stabilising algorithm is \emph{silent} if the system always 
reaches a configuration where the processes no longer change their
states. In the self-stabilising setting, leader election is
particularly important, as many self-stabilising algorithms rely on
the existence of a distinguished node.

The concept of silent self-stabilising algorithms is also related to
{\em proof-labelling} schemes~\cite{KormanKP10} where each node is
given a local certificate to verify certain global properties of the
system (e.g., the existence of a unique leader). Each node can
check its own certificate and those of its neighbours to verify it
is in a correct configuration. If the global configuration is
incorrect, at least one node should be able to detect an inconsistency
using the local certificates. In this case, this node will change its
state, leading its neighbours to change their states and so on, until
the system stabilises to a correct configuration. 
Blin et
al.~\cite{BlinFP14} proved that from any proof-labelling scheme where
each process has a certificate of size $\ell$, one can build a silent
self-stabilising algorithm using $O(\ell + \log n)$ bits of memory per
process, where $n$ is the number of processes in the network. This result is particularly relevant for deriving self-stabilising protocols for classic problems, such as leader election, after constructing a certificate. A common approach in constructing a certificate for leader election is to build a spanning tree rooted at the leader, with all other nodes
pointing towards their parent in the tree. In order to detect cycles
when the system is in an incorrect state, the local certificate at
each node includes the hop-distance to the root, in addition to the
pointer to the parent. So in this approach the size of the certificate
depends on the size of the system.


In this paper, we consider {\em programmable matter} systems which are distributed systems consisting of small, intelligent particles that connect to other particles to construct a given shape and can autonomously change shapes according to input signals. Such systems should be scalable to arbitrary sizes, so the particles are assumed to have constant size memory independent of the size of the system, similar to finite state automata. This requirement also implies that the particles are anonymous (i.e., do not have unique identifiers) and all communication is limited to $O(1)$ size messages. One well-studied model for programmable matter is the {\em Amoebot} model~\cite{AM-derakhshandeh2014amoebot} where particles operate on a triangular grid (see Section \ref{sec:rel-work} for details). Leader election is a well studied problem in this model. Informally, a particle system is said to be \emph{simply connected} when it is connected and does not contain \emph{holes} (i.e., sets of nodes that are not occupied by particles and are surrounded by particles). We refer to the closed curve of smallest area in the plane that encloses all particles in the system as the \emph{boundary}. When the system is simply connected, there are stationary deterministic algorithms for election based on the {\em erosion} approach~\cite{diluna2020shape} where the algorithm starts by deactivating particles on the boundary and moving inwards, until the last active node becomes the leader. This approach works under the minimum assumptions on the system and is the inspiration for our work.

Tolerating faults is important for programmable matter, however none of the existing algorithms for election in these systems are self-stabilising. The question is: given the constant memory of particles, is it still possible to obtain a self-stabilising algorithm for programmable matter, using other properties of such systems? We answer this question in the affirmative, at least for \emph{simply connected} programmable matter systems, showing that in this case, a deterministic silent self-stabilising algorithm for leader election is indeed possible. We use the property of such systems that there is a unique boundary in the system that is well defined, such that any particle can determine whether it is on the boundary.

\subsection{Our results}
We present a silent self-stabilising, deterministic, stationary,
election algorithm for constant-memory particles, in a simply connected system. We prove our algorithm always
stabilises to a unique leader configuration, under a
scheduler with some fairness guarantees.


We first present a constant--memory proof labelling scheme ensuring the existence of a
unique leader. Our certificate orients the edges of
the network and a configuration is valid when: every edge is oriented, every particle has at most three outgoing edges, outgoing edges appear consecutively around each particle and there are no directed triangles. Note that our certificate does not
ensure that the global orientation of the network is acyclic. However,
using the geometric properties of the configuration, we are able to
show that any valid configuration has a unique sink.
As we are interested in a constant memory algorithm, one cannot transform our proof labelling scheme into
a self-stabilising algorithm using~\cite{BlinFP14}. However, we design a very simple
algorithm to orient the edges of the system. Although we do not transform the certificate generically into a self-stabilising algorithm, we do use it in an ad-hoc way within our algorithm. We show that under our
fairness assumption, one always reaches a valid configuration and that
this configuration contains a unique sink that is designated as the
leader.  

Following the classification of~\cite{dubois2011taxonomy}, our
scheduler is Gouda fair~\cite{Gouda2001theory}: for any configuration
$C$ that appears infinitely often in the execution, any successor
$C'$ of $C$ also appears infinitely often. Since we consider finite systems of particles with finite memory, the number of possible configurations is finite. 
In this setting, Gouda fairness ensures that any configuration that is infinitely often reachable is eventually reached. Observe that a scheduler that at each step
activates a particle chosen uniformly at random is a
Gouda fair sequential scheduler.

We do not assume that there exists an agreement on the orientation of
the grid, or even on its chirality. Observe that without simple connectivity and without agreement on
orientation or chirality, it is possible to construct arbitrarily
large rings of even size where all processes have the same geometric
information about the system (see Fig.~\ref{fig:not-SC}). In this setting, the results of~\cite{DolevGS99} show that there is no silent
self-stabilising leader election algorithm using constant memory. This is part of our motivation for considering simply connected systems.

\begin{figure}[htb]
\centering
\includegraphics[scale=.8,page=1]{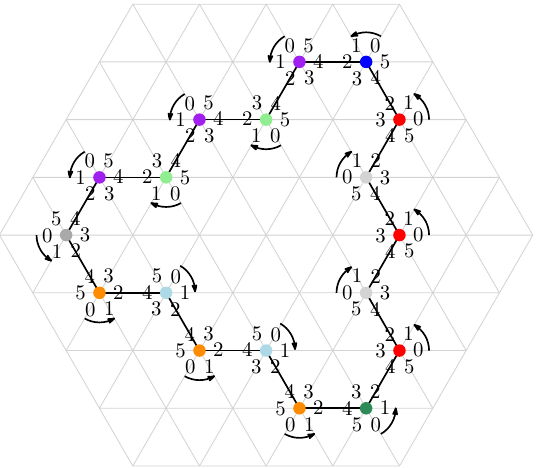}
\caption{An 18-particle ring where for each particle, the occupied
  neighbours are reached through port numbers 2 and 4. Two nodes have
  the same colour if they agree on the grid orientation.}
\label{fig:not-SC}
\end{figure}

\subsection{Related Work} \label{sec:rel-work}
In general networks, there is no self-stabilising leader
election algorithm where each process has a constant memory. More
precisely, Dolev et al.~\cite{DolevGS99} established that any silent
self-stabilising algorithm electing a leader in the class of rings
requires $\Omega(\log n)$ bits of memory per process (where $n$ is the
size of the ring). This lower bound only uses the assumption that
there exists a silent correct configuration and holds for any kind of
scheduler. More recently, Blin et al.~\cite{BlinFB23} showed that
non-silent self-stabilising algorithms require $\Omega(\log \log n)$
bits of memory per process in order to elect a leader synchronously in
the class of rings. Note that these lower bounds are tight in the
sense that for rings, there exist silent (resp. non-silent)
self-stabilising leader election algorithms using $O(\log n)$ (resp.,
$O(\log \log n)$) bits of memory per
process~\cite{DattaLV11,BlinT18}. Constant memory self stabilising algorithms for rings can be designed under special assumptions,
as in \cite{ItkisLS95} which gives an algorithm for prime sized rings
assuming a sequential scheduler. However, \cite{DolevGS99} established that this algorithm cannot be made silent.

There exists a large literature about distributed systems where each
process has finite memory. Cellular automata, introduced in the 40s in \cite{VonNeumann} are one of the best
known models of this kind. More recently, numerous papers have been
devoted to population protocols introduced in \cite{AngluinAER07}. In this model, there is a population of
finite-state agents and at each step, a scheduler picks two agents
that jointly update their states according to their current
states. The scheduler satisfies the same fairness condition as the one
we consider in this paper: any configuration that is infinitely often
reachable is eventually reached.  In this setting, there exist
election protocols using only two states when all agents start in the
same state. However, when considering self-stabilising leader election
in this setting, Cai et al.~\cite{CaiIW12} showed that a protocol using
$n-1$ states cannot solve the problem in a population of $n$ agents. This
shows that even with a Gouda fair scheduler, it is not always possible
to solve the leader election problem in a self-stabilising way when
processes have constant memory. More recently, most of the work in population protocols has focused on the complexity aspects of the protocols assuming a randomised scheduler. Since a randomised scheduler is Gouda fair, there do not exist algorithms that work under a Gouda fair scheduler but not under a randomised one. To the best of our knowledge, there additionally do not exist any computability results that can be achieved under a randomised scheduler but not under a Gouda fair one.

\begin{table*}[t]
  \centering
  \renewcommand{\arraystretch}{1.5}  
  
  \resizebox{\textwidth}{!}{%
  \begin{tabular}{>{\centering\arraybackslash}p{3cm}|
                  >{\centering\arraybackslash}p{2cm}|
                  >{\centering\arraybackslash}p{3cm}|
                  >{\centering\arraybackslash}p{2cm}|
                  >{\centering\arraybackslash}p{2cm}|
                  >{\centering\arraybackslash}p{2cm}}
      \toprule
      \textbf{Paper} & 
      \textbf{Leaders} & 
      \textbf{Simply Connected} & 
      \textbf{Chirality} & 
      \textbf{Movement} & 
      \textbf{Seq.\ Scheduler} \\
      \midrule
      \cite{gastineau2018distributed} & 1 & \checkmark & \checkmark & \texttt{X} & \checkmark \\ 
      \hline
      \cite{emek2019deterministic} & 1 & \texttt{X} & \texttt{X} & \checkmark & \checkmark \\
      \hline
      \cite{dufoulon2021efficient} & 1 & \texttt{X} & \checkmark & \checkmark & \checkmark \\
      \hline
      \cite{dufoulon2021efficient} & 6 & \texttt{X} & \checkmark & \texttt{X} & \checkmark \\
      \hline
      \cite{diluna2020shape} & 3 & \checkmark & \texttt{X} & \texttt{X} & \texttt{X} \\ 
      \hline
      \cite{bazzi2019stationary} & 6 & \texttt{X} & \checkmark & \texttt{X} & \texttt{X} \\  
      \hline
      \cite{briones2023asynchronous} & 1 & \checkmark & \texttt{X} & \texttt{X} & \checkmark \\
      \hline
      \cite{chalopin2024deterministic} & 1 & \texttt{X} & \texttt{ X}$^*$ & \texttt{X} & \texttt{X} \\
      \toprule
  \end{tabular}
  }
  \caption{Deterministic leader election in regular triangular grids in the non self--stabilising setting. ``Simply Connected'' refers to the particle system not having \textit{Holes}. ``Chirality'' is a common sense of rotational orientation. ``Movement'' is the ability of particles to move to neighbouring nodes. A ``Sequential Scheduler'' activates one particle at a time. 
  $^{*}$Uses agreement on the directions along one axis.
  }
  \label{tab:previous-LE}
\end{table*}

Programmable matter was introduced in \cite{toffoli1991programmable} and has since gained popularity. Several models have been introduced, such as \cite{hawkes2010programmable,woods2013active,fekete2021cadbots}. In this paper we consider the well studied \textit{Amoebot} model \cite{AM-derakhshandeh2014amoebot,daymude2023canonical}. In this model, constant-memory computational entities, called \textit{particles}, operate in a triangular grid. Each node of the grid is occupied by at most one particle and particles can determine whether nodes at distance one are occupied by particles. Each particle can communicate with its neighbours by reading their respective registers. The problem of Leader Election has been studied in the specific context of programmable matter in both two dimensional (e.g., \cite{bazzi2019stationary, diluna2020shape,dufoulon2021efficient,emek2019deterministic,gastineau2018distributed,chalopin2024deterministic}) and three dimensional settings (e.g., \cite{gastineau2022leader, briones2023asynchronous}) and both deterministic (e.g., Table \ref{tab:previous-LE}) and randomised algorithms (e.g., \cite{derakhshandeh2015leader}) have been proposed. It is usually assumed that particles do not have any global sense of direction, while some papers (unlike this paper) assume that the particles have a common sense of rotational orientation, called {\em chirality} (e.g., \cite{gastineau2018distributed,dufoulon2021efficient,bazzi2019stationary}) or that particles agree on a common
direction (e.g., \cite{chalopin2024deterministic}). Particles in the Amoebot model have the ability to move to neighbouring nodes (e.g., \cite{dufoulon2021efficient,emek2019deterministic}), which we do not use here.  All known leader election algorithms operate under a fair scheduler (i.e., every particle is activated infinitely often). Although not stated explicitly, we believe that most algorithms, especially the erosion based ones such as \cite{dufoulon2021efficient,diluna2020shape}, also work under an unfair scheduler (i.e., at any time at least one particle that can update its state is activated). Although most known algorithms assume sequential particle activations (e.g., \cite{gastineau2018distributed,dufoulon2021efficient,emek2019deterministic,briones2023asynchronous}), \cite{chalopin2024deterministic,bazzi2019stationary,diluna2020shape} assume particles are activated asynchronously. The existing algorithms for leader election in programmable matter can be categorised based on the use of two main techniques: \textit{erosion} (e.g., \cite{diluna2020shape,gastineau2018distributed}) and message passing on boundaries (e.g., \cite{bazzi2019stationary,derakhshandeh2015leader}). A summary of results on deterministic leader election in the non self--stabilising case is given in Table \ref{tab:previous-LE}.

Research on self-stabilisation in the programmable matter setting is more limited. In \cite{derakhshandeh2015leader}, a randomised leader election algorithm is given and the authors discuss the possibility of making it self-stabilising by combining it with techniques from \cite{awerbuch1994memory,itkis1994fast}. However, it is assumed that particles have $O(\log^*n)$ memory. In the same paper, it is also argued that self-stabilisation in programmable matter is not possible for problems where movement is needed, as the system can become permanently disconnected. More recently, in \cite{daymude2021bio} a self-stabilising algorithm for constructing a spanning forest was introduced. In this case, the algorithm is deterministic and particles have constant memory. However, it is assumed that at least one non-faulty special particle always remains in the system. The need to extend the Amoebot model to also address self-stabilising algorithms is also discussed in \cite{daymude2023canonical}.   

\section{Model}

Let $G_\Delta$ be an infinite regular triangular grid where each node
has six neighbours. A connected particle system, $\mathcal{P}$, is simply connected if $G_\Delta \backslash \mathcal{P}$ is connected. We assume each node of the simply connected $\mathcal{P}$
contains exactly one particle. We call nodes that are in $\mathcal{P}$
\textit{occupied} and those that are not in $\mathcal{P}$,
\textit{empty}. Each particle is anonymous, has constant memory
and is stationary (i.e., does not move). A particle is
incident to six {\em ports}, leading to consecutive neighbouring nodes
in $G_\Delta$. Each port is associated with a label so that ports $i$
and $i+1 \mod 6$ lead to neighbouring nodes.
A particle knows if each port leads to an occupied or
empty node. For each occupied neighbour $q$, particle $p$ knows the label assigned by $q$ to the edge $qp$. This also leads to the following observation.

\begin{observation}\label{obs:consec-ports}
  If $pqr$ is a three--particle triangle, then the ports connecting $r$ to $p$ and
  $q$ are consecutive.
\end{observation}

Each particle has a constant-size register with arbitrary initial contents. A particle can read the register of each occupied neighbour but can only write in its own register. All particles are {\em
inactive} unless activated by the scheduler. An activated particle reads the contents of its register and
the register of each of the neighbouring particles.
Based on this information it updates the contents of its own register
according to the given algorithm.

We call $\mathcal{P}$ the \textit{support} of the particle
system. The configuration $C$ of the system at any time, consists of
the set $\mathcal{P}$ and the contents of the registers of each
particle in $\mathcal{P}$. A distributed algorithm $\cA$ is a set of local rules that particles execute. The rules of the
algorithm depend only on the content of the registers of the particles
and of its neighbours and they modify only the register of the
particle. For an algorithm $\cA$, a configuration $C$, and a particle
$p$, we say that $p$ is \textit{activable} in $C$, if the execution of
$\mathcal{A}$ modifies the contents of the register of $p$. For two
configurations $C$ and $C'$ that have the same support, we say that
$C'$ is a successor (resp. sequential successor) of $C$ if there exists a non-empty set of activable particles $Q$ (resp. an activable particle $p$)
in $C$ such that, when all particles in $Q$ execute (resp. when $p$ executes) $\mathcal{A}$, $C'$ is
obtained. An execution $\mathcal{S}$ is an infinite
sequence of configurations $\mathcal{S} = C_0, C_1, \ldots$ such that
for any $i$, $C_i$ and $C_{i+1}$ have the same support and either
there exists a non-empty set $Q_i$ of activable particles such that when all particles from $Q_i$ execute
$\mathcal{A}$ concurrently in $C_i$, $C_{i+1}$ is obtained, or there is no
activable particle and $C_{i+1} = C_i$. If $|Q_i| = 1$ for any $i$, then we say that $\mathcal{S}$ is a \textit{sequential} execution. If there exists a step where
$C_{i+1} = C_i$, we call $C_i$ a \textit{final configuration}. An
execution is \textit{Gouda fair}
\cite{dubois2011taxonomy,Gouda2001theory} if for any configuration $C$
that appears infinitely often in the execution, any successor $C'$
of $C$ also appears infinitely often. A sequential execution is \textit{sequential Gouda fair} if for any configuration $C$ that appears infinitely often in the execution, any successor $C'$ of $C$ that is obtained by activating one particle of $C$ also appears infinitely often. The notion of \textit{valid} configurations depends on the
algorithm. In the next section, we define the valid configurations we
consider in this paper. An algorithm $\mathcal{A}$ is
\textit{silent} if the set of valid configurations is precisely the set of final configurations. 

\begin{remark} \label{rem:sequential}
   In our analysis we show that our algorithm is correct under a sequential Gouda fair scheduler but from \cite[Theorem 8]{Gouda2001theory} our result also holds under a Gouda fair scheduler (that is not necessarily sequential). 
\end{remark}
  
We now present some notations and observations about the geometry of the system. 
Let $v$ and $v'$ be two neighbouring nodes in $\mathcal{P}$. We say that an edge that is oriented from $v$ to a neighbouring node $v'$ is \textit{outgoing} for $v$ and \textit{incoming} for $v'$. We write $\overrightarrow{vv'}$ to denote an edge directed from $v$ to $v'$ and $vv'$ to denote an undirected edge or an edge whose orientation is not known.
Particles with at least one neighbour that is not in $\mathcal{P}$ are
on the \textit{boundary}. Since $\mathcal{P}$ is simply connected,
there exists only one boundary in the system. Let $p$ be a particle on
the boundary. We say that $p$ is \emph{pending} if $p$ has a unique
neighbouring particle in $\cP$. We say that $p$ is an
\textit{articulation point} if the removal of $p$ disconnects
$\mathcal{P}$. If $p$ is neither pending, nor an articulation point,
then $p$ is incident to two distinct edges $pq$, $pr$ on the boundary
of $\cP$. In this case, since $\mathcal{P}$ is simply connected, there is a path of particles in the 1-neighbourhood of $p$ from $q$ to $r$. 
We say that $p$ is on a
$\theta \in \{60^\circ, 120^\circ, 180^\circ, 240^\circ\}$ angle to
denote the angle that is formed when moving from $q$ to $r$ around $p$
and no empty nodes are encountered. By slight abuse of notation, we
also call a particle on a $\theta$ angle a \textit{$\theta$ particle}.

It is easy to see that a particle on the boundary cannot be on a
$300^\circ$ angle, otherwise $q$ and $r$ are adjacent and $p$ is not
on the boundary, a contradiction. $\mathcal{P}$ is 2--connected if it does not contain any articulation point. Notice that in systems with at least three particles, a system with no articulation point does not contain
any pending particle. In a 2--connected particle system, the following observation implies that there should be a $60^\circ$ or a $120^\circ$
particle.

\begin{observation}\label{obs:formula}
  If $\mathcal{P}$ is 2--connected and contains at least three particles, particles on the boundary satisfy
  the formula $2n_{60} + n_{120} - n_{240} = 6$, where $n_{\theta}$ is
  the number of $\theta$ particles on the boundary.
\end{observation}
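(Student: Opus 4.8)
The plan is to read the claimed identity as a discrete Gauss--Bonnet (turning-angle) statement about the boundary of \cP. First I would record the structural consequences of the hypotheses: since \cP is 2--connected and has at least three particles, it has no articulation points and no pending particles, so its boundary is a single \emph{simple} closed curve that visits every boundary particle exactly once, and (as already observed in the excerpt) every boundary particle sits on a well-defined interior angle $\theta \in \{60\degree, 120\degree, 180\degree, 240\degree\}$, the value $300\degree$ being impossible. Traversing this cycle once with the interior of \cP kept on the left, the tangent direction turns by the \emph{exterior} angle $180\degree - \theta$ at each particle, and by the theorem of turning tangents the total turning over one traversal is exactly $360\degree$.

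Granting this, the computation is immediate. A $60\degree$ particle contributes exterior turn $120\degree$, a $120\degree$ particle contributes $60\degree$, a $180\degree$ particle contributes $0\degree$, and a $240\degree$ particle contributes $-60\degree$. Summing over the boundary and equating to $360\degree$ gives
\[
  120\degree\, n_{60} + 60\degree\, n_{120} + 0\cdot n_{180} - 60\degree\, n_{240} = 360\degree ,
\]
and dividing by $60\degree$ yields $2n_{60} + n_{120} - n_{240} = 6$. The $180\degree$ particles drop out, matching their absence from the formula.

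The main obstacle is justifying rigorously that the total turning equals $360\degree$ rather than treating it as folklore, and to stay within the discrete setting I would actually prefer to replace the turning-tangent appeal by an Euler's-formula count. Triangulate \cP by the unit cells it contains; let $T$ be the number of triangular cells, $V_i$ the number of interior vertices, and $V_b = E_b$ the number of boundary particles (which equals the number of boundary edges, the boundary being a cycle). Summing interior angles in two ways gives $180\degree\, T = 360\degree\, V_i + \sum_\theta \theta\, n_\theta$, since each interior vertex is surrounded by a full $360\degree$ of cells and each boundary particle by exactly $\theta$. Combining this with the face--edge incidence $3T = 2E_i + E_b$ and Euler's relation $V - E + (T+1) = 2$ eliminates $T$, $V_i$ and $E_i$ and collapses to exactly $2n_{60} + n_{120} - n_{240} = 6$.

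Either route leans on the same structural input that 2--connectivity supplies: a single simple boundary cycle, each vertex carrying one interior angle in $\{60\degree, 120\degree, 180\degree, 240\degree\}$. Accordingly, the delicate step is the first one — setting up the boundary as a simple closed curve and ruling out the degeneracies (articulation points, pending particles, and $300\degree$ corners) that would break the angle bookkeeping — whereas the arithmetic that follows is routine.
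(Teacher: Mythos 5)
Your proposal is correct, and its first route is essentially the paper's own proof in disguise: the paper invokes the interior-angle sum $(n-2)\pi$ of a simple polygon and expands it over the four angle types, which is literally equivalent to your statement that the exterior angles $180\degree-\theta$ sum to $360\degree$. Both hinge on the same structural premise that you correctly isolate as the delicate step, namely that the boundary of a 2--connected $\cP$ is a single simple closed polygon with one well-defined interior angle in $\{60\degree,120\degree,180\degree,240\degree\}$ per boundary particle; the paper asserts this in one sentence (leaning on the surrounding discussion, which uses simple connectivity to rule out a particle having two separated runs of empty neighbours), and your write-up is no less and no more rigorous on that point. Your second route, via the triangulation of $\cP$ and Euler's formula ($180\degree\,T = 360\degree\,V_i + \sum_\theta \theta\, n_\theta$, $3T = 2E_i + E_b$, $V-E+T+1=2$), is a genuinely different derivation and the arithmetic checks out; what it buys is a purely combinatorial argument that avoids citing the polygon angle-sum theorem, at the cost of needing the additional fact that the region enclosed by the boundary cycle is entirely tiled by unit triangles of $\cP$ (again a consequence of simple connectivity). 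Either version would serve; the paper's choice is simply the shortest path through standard plane geometry.
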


\begin{proof}
  If $\mathcal{P}$ is 2--connected, the boundary forms a simple
  polygon. We know that the sum of internal angles of a simple polygon
  is $(n-2)\pi$, where $n$ is the number of vertices of the
  polygon. So
  $(n_{60} + n_{120} + n_{180} + n_{240} -2)\pi = n_{60}\frac{\pi}{3}
  + n_{120}\frac{2\pi}{3} + n_{180}\pi + n_{240}\frac{4\pi}{3}$, that
  is, $2n_{60} + n_{120} - n_{240} = 6$.
\end{proof}

\begin{lemma}\label{lemma:boundaryparticles}
  In any simply connected particle system $\cP$ with at least two
  particles, the boundary of $\cP$ contains one of the following:
  \begin{enumerate}
  \item \label{item:pending} a pending particle, or
  \item \label{item:60} a $60^\circ$ particle, or
  \item \label{item:consecutive-120} two $120^\circ$ particles that are connected by a path of
    $180^\circ$ particles on the boundary.
  \end{enumerate}
\end{lemma}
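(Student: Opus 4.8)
The plan is to split on the connectivity of $\cP$ and, once pending particles are removed, to extract the required corner structure from Observation~\ref{obs:formula} by a pigeonhole argument on the cyclic sequence of boundary corners. First I would dispose of the trivial cases: if $\cP$ has exactly two particles each is pending and item~\ref{item:pending} holds, so I may assume $\cP$ has at least three particles; and if some boundary particle is pending, item~\ref{item:pending} holds again. Hence from now on I assume $\cP$ has at least three particles and no pending particle, and I aim to establish item~\ref{item:60} or item~\ref{item:consecutive-120}.

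Consider first the case where $\cP$ has no articulation point. Then Observation~\ref{obs:formula} applies and gives $2n_{60}+n_{120}-n_{240}=6$. If $n_{60}>0$ we are done by item~\ref{item:60}. Otherwise $n_{60}=0$, so $n_{120}=n_{240}+6$ and in particular $n_{120}>n_{240}$. Reading the boundary (a simple polygon) as a cyclic word and deleting the $180^\circ$ particles leaves a cyclic sequence of $n_{120}$ convex ($120^\circ$) and $n_{240}$ reflex ($240^\circ$) corners. If no two $120^\circ$ corners were adjacent in this reduced word, each would be followed by a distinct $240^\circ$ corner, forcing $n_{240}\ge n_{120}$, a contradiction. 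Hence two $120^\circ$ corners are adjacent among corners, i.e.\ the boundary arc between them contains only $180^\circ$ particles, which is exactly item~\ref{item:consecutive-120}.

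Now suppose $\cP$ has an articulation point. Here I would pass to a leaf block $B$ of the block--cut tree of $\cP$ together with its unique cut vertex $c$. Since $\cP$ has no pending particle, $B$ is not a bridge and thus has at least three particles and is $2$--connected; moreover $B$ is simply connected, so Observation~\ref{obs:formula} applies to $B$ and its boundary is a simple polygon. The crucial transfer property is that every particle of $B\setminus\{c\}$ is a non-cut vertex whose $\cP$--neighbourhood coincides with its $B$--neighbourhood; consequently such a particle lies on the boundary of $\cP$ with exactly the same angle as in $B$, and boundary arcs of $B$ avoiding $c$ are boundary arcs of $\cP$. I would then rerun the pigeonhole on $B$ with $c$ marked as forbidden: if $B$ has a $60^\circ$ corner different from $c$, item~\ref{item:60} holds in $\cP$; otherwise the formula for $B$ still leaves a strict surplus of admissible $120^\circ$ corners (those distinct from $c$) over the forbidden corners (the $240^\circ$ corners together with $c$), so two admissible $120^\circ$ corners are adjacent among corners, with only genuine $180^\circ$ particles distinct from $c$ between them, giving item~\ref{item:consecutive-120} via the transfer property.

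The main obstacle is the second half of the last paragraph. First I must justify that a leaf block $B$ is a simply connected polygon: using simple connectivity of $\cP$ and the fact that a leaf block has a single cut vertex, $B$ can enclose neither an empty node (such a node would be an empty region of $\cP$ cut off from infinity) nor particles of another block (these would either be adjacent to $B$ along many vertices, contradicting the single cut vertex, or be separated from $B$ by enclosed empty nodes, again impossible), so $G_\Delta\setminus B$ is connected. Second I must carry out the cut-vertex bookkeeping, tracking the single correction term coming from $c$ and verifying that the surplus of admissible $120^\circ$ corners over forbidden corners stays strictly positive in each subcase (it is at least $4$ when $n_{60}(B)=0$ and at least $3$ when the only $60^\circ$ corner is $c$). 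This is exactly where one must ensure the path of $180^\circ$ particles does not secretly pass through $c$, which is why $c$ is treated as a separator rather than merely deleted.
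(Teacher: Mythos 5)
Your proposal is correct and follows essentially the same route as the paper: both pass to a leaf block of the block--cut tree (after dispatching the pending case), transfer the angle of each non-cut vertex from the block to $\cP$, and combine Observation~\ref{obs:formula} with a pigeonhole count on the cyclic sequence of $120^\circ$ and $240^\circ$ corners, avoiding the cut vertex, to obtain either a $60^\circ$ particle or two $120^\circ$ particles joined by $180^\circ$ particles. The differences are only presentational: you treat the $2$-connected case separately and regard the cut vertex as a generic forbidden corner, while the paper exploits that the cut vertex must be a $60^\circ$ or $120^\circ$ particle to derive the explicit bound $n'_{120}\geq n'_{240}+4$.
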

\begin{proof}
  A block is a 2--connected component of $\cP$. As $\cP$ contains
  at least two particles, each block is either an edge or it contains
  at least three particles. The block tree of $\cP$ is a tree where
  each vertex is a block and there is an edge between two blocks if
  they share a vertex (i.e., an articulation point of
  $\cP$). A leaf, $\cP'$, of the block tree is a 2-connected component of $\cP$ and contains a unique articulation
  point $p'$ of $\cP$. In the remainder of the proof we show that there is a particle satisfying one of the cases of the lemma statement in every leaf. If $\cP'$ contains precisely two particles $p'$
  and $q'$, then $p'$ is the unique neighbour of $q'$ in $\cP$ and
  $q'$ is a pending particle, as in Case \ref{item:pending}.

  Suppose $\cP'$ contains at least three particles. Since
  $\cP'$ is 2-connected, every particle on the boundary of $\cP'$ is
  a $\theta \in \{60^\circ, 120^\circ, 180^\circ, 240^\circ\}$ particle. Any $\theta$ particle $p \neq p'$ of $\cP'$ is also a $\theta$
  particle of $\cP$. So a $60^{\circ}$ particle
  $p \neq p'$ in $\cP'$, is a $60^{\circ}$ particle in
  $\cP$, which is Case \ref{item:60}.

  Suppose now that in $\cP'$, any boundary particle $p$ different from
  $p'$ is a $\theta$ particle with
  $\theta \in \{120^\circ, 180^\circ, 240^\circ\}$. Let $n'_{120}$,
  $n'_{180}$, $n'_{240}$ be respectively the number of
  $120^\circ$, $180^\circ$, $240^\circ$ particles in $\cP'$ that are
  different from $p'$. Since $p'$ is an articulation point, $p'$
  cannot have more than three consecutive particle
  neighbours. Consequently, in $\cP'$, $p'$ is either a $60^\circ$ or
  a $120^\circ$ particle.  If $p'$ is a $60^\circ$ particle in $\cP'$,
  from Observation~\ref{obs:formula}, we have
  $2 + n'_{120} - n'_{240} = 6$.  If $p'$ is a $120^\circ$ particle in
  $\cP'$, from Observation~\ref{obs:formula}, we have
  $n'_{120} +1 - n'_{240} = 6$. In both cases, we then have
  $n'_{120} \geq n'_{240} + 4$.  Let $p_1, \ldots, p_{n'_{120}}$ be
  the $120^\circ$ particles of $\cP'$ in the order in which they
  appear when we move on the boundary of $\cP'$ starting from $p'$
  (i.e., $p'$ appears between $p_{n'_{120}}$ and $p_1$). Since
  $n'_{120} \geq n'_{240} + 4 > n'_{240} + 1$, there exists an index
  $1 \leq i \leq n'_{120}-1$ such that only $180^\circ$ particles
  appear on the boundary of $\cP'$ between $p_i$ and $p_{i+1}$. Since
  all these $180^\circ$ particles are also $180^\circ$ particles on
  the boundary of $\cP$, we are in Case \ref{item:consecutive-120}.
\end{proof}

We now explain how two adjacent particles in a triangle can detect each other's chirality. In the non self--stabilising setting, chirality detection is possible when movement is allowed as shown in \cite{emek2019deterministic} and in certain cases when particles agree on the directions along one axis of the grid \cite{chalopin2024deterministic}. Furthermore, particles in exactly one common triangle and no other neighbours participating in the procedure, can agree on chirality when they are able to exchange messages like in \cite{diluna2020shape}. To the best of our knowledge, when particles cannot move or exchange messages, like in our setting, a chirality detection algorithm does not exist even in the non self--stabilising context. 
For a particle $p$, we let $\{\msp_i \mid 0\leq i \leq 5\}$ be the set of ports incident to $p$. The label $\lambda(\Pi)$ of a path $\Pi = (p_1,p_2, \ldots, p_k)$ in
the graph induced by the particles is a sequence of pairs of labels
$(\msa_1,\msb_2),(\msa_2,\msb_3), \ldots, (\msa_{k-1},\msb_k)$ where
for each $i$, $\msa_i$ (resp. $\msb_i$) is the port connecting $p_i$
to $p_{i+1}$ (resp. $p_{i-1}$). Following the definitions of \cite{yamashita1988computing}, we define
$\view_k(p)$, to be the set of labels $\lambda(\Pi)$ of paths $\Pi$
starting at $p$ of length at most $k$. Note that for each $1 \leq j \leq k$,
if both $(\msa_1,\msb_2),(\msa_2,\msb_3), \ldots, (\msa_{j},\msb_{j+1})$ and
$(\msa_1,\msb_2),(\msa_2,\msb_3), \ldots, (\msa_{j},\msb_{j+1}')$ belong
to $\view_k(p)$,
then $\msb_{j+1} =\msb_{j+1}'$.
From \cite{boldi2002universal}, for any constant $k$, each particle
$p$ can construct $\view_k(p)$ in a self stabilising way with constant
memory.

As we have defined particles to be anonymous, particle names in port labels are used solely to improve readability (i.e., they do not correspond to particle identifiers). 
An example of some of the paths belonging to $\view_3(p)$ for particle $p$ in Figure~\ref{fig:chirality-q2-up}, assuming that $qr$ is labelled $(\msq_2, \msr_0)$, $qr'$ is labelled $(\msq_0, \msr'_2)$ and for each $i$, $p_i \allowbreak = \allowbreak q_i \allowbreak = \allowbreak r_i \allowbreak = \allowbreak r'_i \allowbreak = i$, is: $[(1,1), \allowbreak (2,0), \allowbreak (1,0)], \allowbreak [(0,1), \allowbreak (0,2), \allowbreak (1,1)], \allowbreak [(1,1), (0,2)]$. The difficulty in determining the chirality of a neighbouring particle therefore arises from $p$ needing to determine which particles the ports incident to $q$ correspond to.

\begin{lemma}\label{lem:chirality-detection}
  For any triangle of particles $pqr$, $p$ can infer the chirality of
  $q$ from $\emph{\view}_3(p)$.
\end{lemma}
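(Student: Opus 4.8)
The plan is to reduce the chirality of $q$ relative to $p$ to a simple arithmetic condition on four port labels that $p$ can extract from $\view_3(p)$, and then to verify this condition by a short geometric case analysis. Write $\alpha$ for the port of $p$ leading to $q$, $\alpha'$ for the port of $p$ leading to $r$, $\beta$ for the port of $q$ leading to $p$, and $\gamma$ for the port of $q$ leading to $r$. Since $q$ and $r$ are adjacent and both neighbour $p$, they subtend a $60\degree$ angle at $p$, so $\alpha'\equiv\alpha\pm1\pmod 6$; symmetrically $\gamma\equiv\beta\pm1\pmod 6$. Particle $p$ reads $\alpha$ and $\alpha'$ from its own state, and it reads $\beta$ since by the model $p$ sees the label that $q$ assigns to the edge $qp$. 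The only nontrivial quantity is $\gamma$, and isolating it is where I expect the real work to lie.

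First I would pin down $\gamma$. The two triangles sharing the edge $pq$ give $q$ exactly two common neighbours with $p$, reached from $q$ by ports $\beta-1$ and $\beta+1$; thus $\view_2(p)$ already contains two candidate length-$2$ paths $p\to q\to\cdot$, one for each value of $\gamma\in\{\beta-1,\beta+1\}$. Choosing the wrong candidate negates $\gamma-\beta$ and would flip the final answer, so $p$ must determine which candidate is the $r$ of the given triangle, i.e.\ the neighbour it reaches directly through port $\alpha'$. This matching is exactly what forces radius $3$, and I regard it as the main obstacle: in $G_\Delta$ the only $3$-cycles through the edge $pq$ are the two triangles on $pq$, and in $\view_3(p)$ each candidate extends to a closed walk $p\to q\to\cdot\to p$ around one of these two triangles. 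The two walks close at $p$ through the two distinct ports $\alpha'$ and $\alpha''=2\alpha-\alpha'\pmod 6$, so $p$ selects the walk whose final label-pair has second coordinate $\alpha'$ and reads $\gamma$ as the outgoing port of $q$ on that walk. With $\view_2(p)$ the closing edge back to $p$ is invisible and the two candidates are indistinguishable, which is precisely why $\view_3$ and not less is needed.

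Finally I would run the geometry. Fixing a drawing of $pqr$ in $G_\Delta$ there are two independent binary choices beyond $p$'s own frame: whether $r$ lies on the clockwise or counter-clockwise side of the edge $pq$ as seen from $p$, and whether $q$'s port numbering agrees with or mirrors $p$'s. For each of the four combinations one computes the turns $\alpha'-\alpha$ and $\gamma-\beta$ in $\{+1,-1\}$ directly by tracking the $60\degree$ rotations around the triangle. The outcome, which must be checked carefully in every case, is that $p$ and $q$ share the same chirality if and only if $(\alpha'-\alpha)(\gamma-\beta)\equiv-1\pmod 6$, and crucially this criterion is the same whether $r$ lies on the clockwise or the counter-clockwise side. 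Hence $p$ needs neither an absolute orientation nor to determine the side of $r$: it simply multiplies the two observed $\pm1$ turns and concludes that $q$ has its own chirality exactly when the product is $-1$. As $\alpha,\alpha',\beta,\gamma$ are all available in $\view_3(p)$, this determines the chirality of $q$ relative to $p$, as required.
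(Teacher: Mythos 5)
Your reduction to the four ports $\alpha,\alpha',\beta,\gamma$ and your final criterion (same chirality iff $(\alpha'-\alpha)(\gamma-\beta)\equiv-1\pmod 6$) match the paper's strategy in substance: everything hinges on $p$ determining the port $\gamma$ of $q$ leading to $r$, and the easy case where $r$ is the only common neighbour of $p$ and $q$ is handled the same way in both arguments. The gap is in your disambiguation step when both common neighbours $r$ and $r'$ are occupied. You select ``the walk whose final label-pair has second coordinate $\alpha'$,'' implicitly assuming that $p$ can recognise, from its view, which length-3 walks are \emph{closed}. But $\view_3(p)$ is only a set of label sequences over anonymous particles with arbitrary (independent) port numberings; $p$ cannot tell whether a walk returns to $p$ or merely ends at some vertex whose port label happens to coincide numerically with $\alpha'$. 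Concretely, let $z$ be the second common neighbour of $q$ and $r'$ (besides $p$), and suppose $z$ is occupied and labels its port towards $r'$ with the number $\alpha'$. Then the open walk $p\to q\to r'\to z$ produces a label sequence that starts with $(\alpha,\beta)$, continues through the ``wrong'' candidate port $\beta\mp1$ of $q$, has consecutive ports at the third vertex (so even checking triangle-consistency does not help), and ends with second coordinate $\alpha'$. Your criterion then matches two sequences with different middle ports, and $p$ cannot decide which one yields $\gamma$.

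This is precisely the difficulty the paper's proof is built around: its Claim on the labelling of $qr$ tests the closed-walk label in \emph{both} directions ($p\to q\to r\to p$ and $p\to r\to q\to p$) \emph{and} adds a disjunctive clause ruling out the spurious labelling of $pr'$ (the $(\msp_2,\msr_5)$ case), which is exactly the configuration in which an open walk through $r'$ can impersonate the closed walk through $r$. Your write-up correctly identifies why radius $3$ is necessary, but not why it is sufficient; to close the gap you would need to add a condition, as the paper does, that excludes the impersonating walks, and then re-run the case analysis to show the combined test is unambiguous for every admissible labelling of $r$ and $r'$.
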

\begin{proof}
      In the following, for a particle $p$, we let
      $\{\msp_i \mid 0\leq i \leq 5\}$ be the set of ports incident to $p$
      and we assume that either $\msp_{i+1} = \msp_i + 1$ for each
      $0 \leq i \leq 5$, or $\msp_{i+1} = \msp_i - 1$ for each $0 \leq i \leq 5$
      (where additions are made modulo $6$). 

      Consider a triangle $pqr$. Let $\msp_1$ (resp. $\msq_1$) be the port
      through which $p$ (resp. $q$) is connected to $q$ (resp. $p$).
      Further, let $p$ (resp. $r$) be connected to $r$ (resp. $p$) through
      $\msp_0$ (resp. $\msr_1$). Observe that if $p$ learns the port
      through which $q$ is connected to $r$, it also learns the chirality
      of $q$. Note that by Observation~\ref{obs:consec-ports}, this port is
      either $\msq_0$ or $\msq_2$ and the port from $r$ to $q$ is either
      $\msr_0$ or $\msr_2$. Notice that if $r$ is the only common
      neighbour of $p$ and $q$, then only one of
      $\{(\msp_1,\msq_1),(\msq_0,\msx) \mid 0 \leq \msx \leq 5\} \cup
      \{(\msp_1,\msq_1),(\msq_2,\msx) \mid 0 \leq \msx \leq 5\}$ is in
      $\view_3(p)$ and $p$ can then infer the chirality of $q$. Suppose
      now that $p$ and $q$ have two common neighbours $r$ and $r'$.

      \begin{claim}\label{claim:chirality}
        Let $pqr$ be a triangle of particles. The edge $qr$ is labelled $(\msq_2,\msr_0)$ if and only if the following
        formula holds:
      
        $$ (\msp_1, \msq_1), (\msq_2, \msr_0), (\msr_1, \msp_0) \in \textup{\texttt{view}}_3(p) \text{ } \land $$
        $$ (\msp_0,\msr_1), (\msr_0, \msq_2), (\msq_1, \msp_1) \in \textup{\texttt{view}}_3(p) \text{ } \land$$
        $$ \Bigl[ (\msp_1, \msq_1), (\msq_0, \msr_2) \notin \textup{\texttt{view}}_3(p) \text{ } \lor $$
        $$ (\msp_2, \msr_5) \notin \textup{\texttt{view}}_3(p) \text{ } \Bigr] $$
      \end{claim}
      \begin{proof}
        Notice that the configuration is either the one on Figure \ref{fig:chirality-q2-up} or on Figure \ref{fig:chirality-q2-down}, since $p$ knows the labels of its incident edges.
      
        \begin{figure}[ht]
          \centering
            \begin{subfigure}{.49\textwidth}
              \centering
              \begin{tikzpicture}[scale=.4]
            
                  \triangularGrid{1}{2}{4}{2*1.73}
      
                  \filldraw[fill=white] (60:4) circle(10pt) node[] {\scriptsize $p$} node[label={[label distance=1mm]0:\scriptsize $\msp_1$}] {} node[label={[label distance=.5mm]80:\scriptsize $\msp_0$}] {} node[label={[label distance=.5mm]-80:\scriptsize $\msp_2$}] {} ++ (60:4) circle(10pt) node[] {\scriptsize $r$} node[label={[label distance=.5mm]-93:\scriptsize $\msr_1$}] {} ++ (-60:4) circle(10pt) node[] {\scriptsize $q$} node[label={[label distance=1mm]180:\scriptsize $\msq_1$}] {} node[label={[label distance=.5mm]93:\scriptsize $\msq_2$}] {}  ++ (-120:4) circle(10pt) node[] {\tiny $r'$};
              
              \end{tikzpicture}
              \caption{\centering}
                \label{fig:chirality-q2-up}
            \end{subfigure}
            \begin{subfigure}{.49\textwidth}
                \centering
                \begin{tikzpicture}[scale=.4]
              
                \triangularGrid{1}{2}{4}{2*1.73}
      
                \filldraw[fill=white] (60:4) circle(10pt) node[] {\scriptsize $p$} node[label={[label distance=1mm]0:\scriptsize \color{black} $\msp_1$}] {} node[label={[label distance=.5mm]80:\scriptsize $\msp_0$}] {} node[label={[label distance=.5mm]-80:\scriptsize $\msp_2$}] {} ++ (60:4) circle(10pt) node[] {\scriptsize $r$} node[label={[label distance=.5mm]-93:\scriptsize $\msr_1$}] {} ++ (-60:4) circle(10pt) node[] {\scriptsize $q$} node[label={[label distance=1mm]180:\scriptsize $\msq_1$}] {} node[label={[label distance=.5mm]-93:\scriptsize $\msq_2$}] {} ++ (-120:4) circle(10pt) node[] {\tiny $r'$};
                
                \end{tikzpicture}
                \caption{\centering}
                \label{fig:chirality-q2-down}
            \end{subfigure}
            \caption{The 1-view of $p$ and the two possible orientations of $q$}
        \end{figure}
        
        First let us suppose that the edge $qr$ is labelled
        $(\msq_2,\msr_0)$. Then, the first two expressions of the formula
        are satisfied. Let us suppose
        $(\msp_1, \msq_1), (\msq_0, \msr_2) \in \view_3(p)$. Then from
        Observation \ref{obs:consec-ports}, $qr'$ is labelled $(\msq_0,\msr_2)$
        and $pr'$ is either labelled $(\msp_2,\msr_1)$ or
        $(\msp_2,\msr_3)$. In either case,
        $(\msp_2,\msr_5) \notin \view_3(p)$ and the formula is satisfied.
      
        Let us now suppose that the formula is satisfied and assume that
        $qr$ is not labelled $(\msq_2,\msr_0)$. Then by
        Observation~\ref{obs:consec-ports}, $qr$ is labelled either
        $(\msq_2,\msr_2)$, or $(\msq_0,\msr_0)$, or
        $(\msq_0,\msr_2)$. Note that the first two cases are impossible
        since $(\msp_1,\msq_1),(\msq_2,\msr_0)$ and
        $(\msp_0,\msr_1),(\msr_0,\msq_2)$ belong to $\view_3(p)$.
        Consequently, $qr$ is labelled $(\msq_0,\msr_2)$ and since
        $(\msp_1,\msq_1)(\msq_2,\msr_0) \in \view_3(p)$, by Observation
        \ref{obs:consec-ports}, $qr'$ is labelled $(\msq_2,\msr_0)$, and
        the label of $pr'$ is either $(\msp_2,\msr_5)$ or
        $(\msp_2,\msr_1)$. Note that we are necessarily in the second case
        since we assumed that the formula holds and since
        $(\msp_1,\msq_1)(\msq_0,\msr_2) \in \view_3(p)$. This implies that
        $(\msp_1,\msq_1)(\msq_2,\msr_0)(\msr_1,\msp_2) \in \view_3(p)$,
        and thus
        $(\msp_1,\msq_1)(\msq_2,\msr_0)(\msr_1,\msp_0) \notin \view_3(p)$,
        contradicting the fact that the formula holds. This concludes the proof of Claim \ref{claim:chirality}.        
      \end{proof}  
    
    From Observation \ref{obs:consec-ports}, $qr$ must be labelled
    $(\msq_0,\msr_0)$, $(\msq_0,\msr_2)$, $(\msq_2,\msr_0)$ or $(\msq_2,\msr_2)$. Applying Claim
    \ref{claim:chirality} to each possibility, $p$ can detect the label
    of $qr$ and thus infer the chirality of $q$, concluding the proof of Lemma \ref{lem:chirality-detection}.
\end{proof}

Notice that the first condition of Claim \ref{claim:chirality} does not imply the second. Figure \ref{fig:chirality-cond1-3} provides an example of a port labeling that satisfies the first and third conditions, while the edge $qr$ is not labelled $(\msq_2, \msr_0)$.

\begin{figure}[ht]
  \centering
    \begin{tikzpicture}[scale=.4]
  
        \triangularGrid{1}{2}{4}{2*1.73}

        \filldraw[fill=white] (60:4) circle(10pt) node[] {\scriptsize $p$} node[label={[label distance=1mm]0:\scriptsize $\msp_1$}] {} node[label={[label distance=.5mm]80:\scriptsize $\msp_0$}] {} node[label={[label distance=.5mm]-80:\scriptsize $\msp_2$}] {} ++ (60:4) circle(10pt) node[] {\scriptsize $r$} node[label={[label distance=.5mm]-93:\scriptsize $\msr_1$}] {} node[label={[label distance=.5mm]-60:\scriptsize $\msr_0$}] {} ++ (-60:4) circle(10pt) node[] {\scriptsize $q$} node[label={[label distance=1mm]180:\scriptsize $\msq_1$}] {} node[label={[label distance=.5mm]93:\scriptsize $\msq_0$}] {} node[label={[label distance=.5mm]-95:\scriptsize $\msq_2$}] {}  ++ (-120:4) circle(10pt) node[] {\tiny $r'$} node[label={[label distance=.5mm]95:\scriptsize $\msr_5$}] {} node[label={[label distance=.5mm]80:\scriptsize $\msr_0$}] {} node[label={[label distance=.5mm]0:\scriptsize $\msr_1$}] {} ++ (0:4) circle(10pt) node[] {\tiny $p'$} node[label={[label distance=.5mm]180:\scriptsize $\msp_0$}] {}; 
    
    \end{tikzpicture}
    \caption{\centering A particle configuration satisfying the first and last condition of Claim \ref{claim:chirality}, but not the second.}
    \label{fig:chirality-cond1-3}
\end{figure}

\section{A Proof Labelling Scheme  for Leader Election}\label{sec:certificate}
Our aim is to orient the edges between particles so that
a unique sink particle (i.e., particle with no outgoing edges) that
we define to be the leader exists. The certificate given to each particle
consists of a direction for each edge incident to the particle. The
orientation of the edges is chosen so that particles that are reached
by an outgoing edge of some particle $p$, induce a connected graph of
size at most 3.
In general we cannot avoid the existence of directed cycles in the
orientation, but we will show that the existence of a unique sink is
always guaranteed. Each particle $p$ can check that the following
rules are locally satisfied or detect an error.

\begin{enumerate}[label=R\arabic*]
    \item \label{rule:all-edges-directed} Each edge is oriented and both particles agree on the direction of the edge. 
    \item \label{rule:incoming-edge} Particle $p$ has at most three outgoing edges. We consider edges between $p$ and empty nodes to be incoming for $p$.
    \item \label{rule:consecutive} When looking at the ports of $p$ cyclically, all outgoing edges of $p$ are consecutive. 
    \item \label{rule:no-cyclic-triangles} For every 3-particle triangle $p$ belongs to, the triangle is not a directed cycle. 
\end{enumerate}

In Figure \ref{fig:rules-broken} we give and example of each rule being violated.
\begin{figure}[htb]
  \centering
  \includegraphics[scale=.59,page=1]{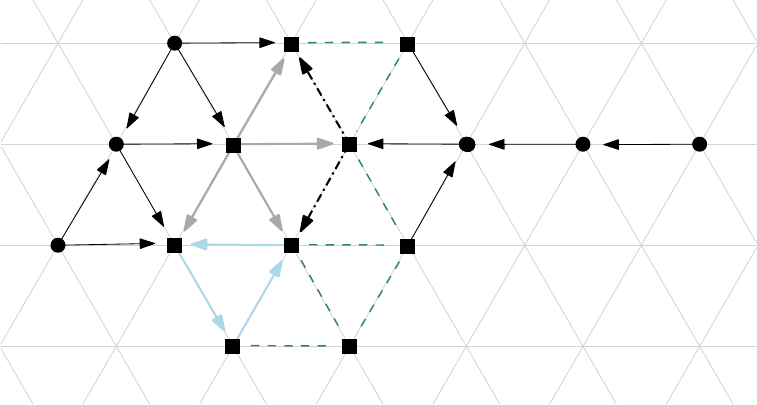}
  \caption{A configuration containing broken rules. Each square particle can detect at least one rule being violated. Green (dashed) edges break \ref{rule:all-edges-directed}, grey edges break \ref{rule:incoming-edge}, black (dash dotted) edges break \ref{rule:consecutive} and blue edges break \ref{rule:no-cyclic-triangles}.}
  \label{fig:rules-broken}
\end{figure}

We call a configuration where every particle satisfies
\ref{rule:all-edges-directed}--\ref{rule:no-cyclic-triangles} a
\textit{valid configuration}.  Note that
\ref{rule:no-cyclic-triangles} does not guarantee an acyclic
orientation (i.e., that larger directed cycles do not exist in the
configuration). We do not forbid global directed cycles, but we will prove that
even if directed cycles of size larger than three are formed by the incoming
and outgoing edges, the remaining rules guarantee that there exists a
unique sink in the system that we define to be the leader. An example of a directed cycle of size larger than three that does not violate any rule is given in Figure \ref{fig:global-cycle}.

\begin{figure}[t]
  \centering
  \includegraphics[scale=.8,page=1]{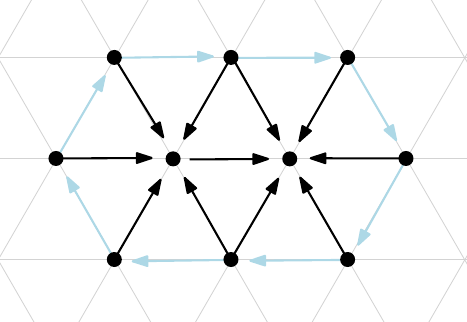}
  \caption{An example of a configuration containing a directed cycle (marked by blue edges) of more than three particles that does not violate any rule.}
  \label{fig:global-cycle}
\end{figure}

\begin{theorem}\label{th:certificate}
  If all rules
  \ref{rule:all-edges-directed}--\ref{rule:no-cyclic-triangles} are
  satisfied, then there exists a unique sink in the system.
\end{theorem}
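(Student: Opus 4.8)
The plan is to prove existence and uniqueness simultaneously by reading the edge orientation as a discrete vector field on the simply connected support and running a combinatorial Poincaré--Hopf / Euler-characteristic argument: I will show that the total index of the field equals $1$, that each sink contributes exactly $+1$, and that nothing else contributes, so there is exactly one sink. Concretely, I first close up the configuration by adding a single virtual vertex $\infty$ representing the whole empty exterior, joining it to every boundary particle (one edge per empty port) and triangulating the outer region, so that the occupied particles together with $\infty$ form a triangulation of the sphere. Following the convention of R2 that edges towards empty nodes are incoming, I orient every edge incident to $\infty$ away from $\infty$; thus $\infty$ behaves as a single source lying outside the region, i.e.\ the field points inward all along the boundary.

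Next I compute the local contributions using the sign-change index $i(x)=1-c(x)/2$, where $c(x)$ counts the in/out transitions met while traversing the edges around a vertex $x$ or around a face $x$. For a particle $p$, rule R3 forces its outgoing edges to form a single consecutive arc, and rule R2 ($\le 3$ outgoing edges, so $p$ is never a full source) guarantees this arc is proper; hence $c(p)=2$ and $i(p)=0$ whenever $p$ has at least one outgoing edge, while $c(p)=0$ and $i(p)=1$ exactly when $p$ is a sink. For the faces: since $\mathcal{P}$ is simply connected, no empty node can be enclosed by occupied particles, so every bounded face of the occupied graph is a unit triangle; rule R4 says each such triangle is acyclically oriented, so $c=2$ and its index is $0$. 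The triangles created by the triangulation at $\infty$ each carry two edges leaving $\infty$, hence are acyclic as well and have index $0$, whereas $\infty$ itself is a source with $i(\infty)=1$. Summing over all vertices and faces and using $V-E+F=2$ for the sphere gives $i(\infty)+\sum_p i(p)=2$, that is, the number of sinks equals $1$.

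It is reassuring that this constant matches the geometry: Observation~\ref{obs:formula} states $2n_{60}+n_{120}-n_{240}=6$, which is precisely the statement that the boundary polygon turns through $360\degree=6\times 60\degree$; this total turning is the geometric incarnation of the index carried by the inward-pointing field along the boundary, and it is what ultimately fixes the count at one sink. The step I expect to be the main obstacle is making the discrete index bookkeeping fully rigorous and, in particular, handling supports that are not $2$--connected. When $\mathcal{P}$ has articulation points or pending particles the boundary is no longer a simple polygon, the naive outer-region triangulation may attach $\infty$ to a vertex more than once, and Observation~\ref{obs:formula} does not apply directly. I would resolve this by decomposing $\mathcal{P}$ along its block tree exactly as in the proof of Lemma~\ref{lemma:boundaryparticles}: verify that each articulation point and each pending particle is counted consistently (a pending particle's single edge is forced by R3 not to create a second sink), and then argue either by induction on the blocks or by checking that the Euler count is insensitive to gluing blocks at cut vertices. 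A careful but essentially routine verification of the sign-change index at these degenerate configurations should complete the proof.
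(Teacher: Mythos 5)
Your approach is sound and it is genuinely different from the paper's. The paper never proves Theorem~\ref{th:certificate} in isolation: the statement is absorbed into the proof of Theorem~\ref{th:algo-correct}, which runs a minimal-counterexample induction on $|\cP|$ --- using Lemma~\ref{lemma:boundaryparticles} to locate a pending, $60^\circ$, or $120^\circ$ boundary particle, showing its incident edges are stable, deleting it, and invoking minimality --- so the unique sink is exhibited only in the configurations the algorithm converges to (which suffices, since every valid configuration is a final one). Your discrete Poincar\'e--Hopf argument instead proves the theorem directly for an arbitrary valid configuration, and the bookkeeping checks out: R\ref{rule:incoming-edge}, with its convention that empty ports are incoming, is exactly what prevents any particle from being a source in the coned-off map (essential, since sources also carry index $+1$); R\ref{rule:consecutive} gives index $0$ to every non-sink; simple connectivity makes every bounded face a unit triangle; R\ref{rule:no-cyclic-triangles} kills the interior face indices; and each outer face contains two edges leaving $\infty$, so it is not a directed cycle either. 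Two remarks on the parts you flag as delicate. First, the identity $\sum_v i(v)+\sum_f i(f)=\chi$ for an arbitrary edge orientation of a connected map is a known lemma (it appears in the literature on bipolar orientations and Schnyder woods), but it is the load-bearing step and must be stated precisely and cited or proved; note also that ``one edge per empty port'' produces bigon faces rather than a triangulation, so you must either check that a bigon whose two edges both leave $\infty$ has index $0$ (it does), or instead cone off along the outer boundary walk, attaching one edge of $\infty$ per occurrence of a particle in the walk, which yields genuine triangles. Second, your worry about non-$2$-connected supports is unfounded: the coning construction and the index formula are indifferent to articulation points and pending particles (a cut vertex simply receives one $\infty$-edge per maximal gap of empty ports, and collapsing each such gap to a single incoming edge preserves the ``at most two alternations'' property guaranteed by R\ref{rule:consecutive} and R\ref{rule:incoming-edge}), so no block-tree decomposition is needed and Observation~\ref{obs:formula} is not used at all. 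What your route buys is a short, self-contained, and arguably more illuminating proof of Theorem~\ref{th:certificate}; what it does not buy is any part of the convergence claim of Theorem~\ref{th:algo-correct}, for which the paper's boundary-peeling induction is still required.
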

We prove Theorem \ref{th:certificate} in Section \ref{sec:proof}. 
A valid configuration that does not contain any directed cycle is a
\textit{valid acyclic orientation}. Observe that any particle system
admits a valid acyclic orientation, as it can be constructed from any
erosion based Leader Election algorithm for programmable matter
(e.g., \cite{diluna2020shape,dufoulon2021efficient}). Indeed, consider
an execution of an erosion algorithm on a system, and orient any edge
$pq$ from $p$ to $q$ if $p$ is eroded before $q$ in the
execution. This orientation is acyclic and it thus obviously satisfies
\ref{rule:all-edges-directed} and \ref{rule:no-cyclic-triangles}.
Since an erosion based algorithm erodes only a particle that does not
disconnect its neighbourhood and that is strictly convex (i.e., that
has at most three consecutive non-eroded neighbours), the orientation also
satisfies \ref{rule:incoming-edge} and \ref{rule:consecutive}. 

\section{A Self-Stabilising Algorithm for Leader Election} \label{sec:algorithm}

In Section \ref{sec:certificate}, we claimed that \emph{if} all rules are locally satisfied a unique sink exists in the system. Here, we show \emph{how} a valid configuration is reached from a configuration containing errors. Our algorithm is simple: when a particle, $p$, is incident to an undirected edge $e$, $p$ orients $e$ as outgoing. If the orientation of the edges incident to $p$ violates a rule, $p$ undirects all its outgoing edges. Each activated particle always executes both lines of Algorithm \ref{algo:ss-le}.  

\begin{algorithm}[ht]
    \caption{Self Stabilising LE}
    \label{algo:ss-le}

    \If{$\neg$\emph{\ref{rule:all-edges-directed}}}{Mark all \texttt{undirected} edges as \texttt{outgoing}}
    \If{$\neg$\emph{\ref{rule:incoming-edge}} $\lor$ $\neg$\emph{\ref{rule:consecutive}} $\lor$ $\neg$\emph{\ref{rule:no-cyclic-triangles}}}{Mark all \texttt{outgoing} edges as \texttt{undirected}}

\end{algorithm}

The directed edges incident to particles are encoded by each particle $p$
having a variable \emph{link}$_p[v'] \in \{in, out\}$ for each
neighbouring node $v'$. 
For particle $p \in \mathcal{P}$ and
$v' \in V_{G_\Delta \backslash \mathcal{P}}$,
\emph{link}$_p[v'] = in$. Any particle $p$ can locally detect whether
it is incident to an empty node, so we assume that edges between
occupied and unoccupied nodes are always marked correctly. For two adjacent particles
$p,p' \in \mathcal{P}$, if
\emph{link}$_p[p'] = in$ and \emph{link}$_{p'}[p] =
out$, $pp'$ is directed from $p'$ to $p$. We encode an \textit{undirected} edge between two particles $p$
and $p'$ as \emph{link}$_p[p'] =$ \emph{link}$_{p'}[p] = in$. We
address \emph{link}$_p[p'] =$ \emph{link}$_{p'}[p] = out$, as a
special case. The endpoint that is activated first (say $p$) marks
\emph{link}$_p[p']$ as $in$. Notice that this is only possible during
the first activation of $p$. Take for example each edge $pp'$. Executing Algorithm \ref{algo:ss-le}, $p$ can either orient edges as outgoing (by setting \emph{link}$_{p}[p']$ from $in$ to $out$) or mark edges as undirected (by setting \emph{link}$_{p}[p']$ from $out$ to $in$). Notice that in the former case, since the edge is marked as \emph{outgoing}, \emph{link}$_{p'}[p]$ must be $in$. Hence, after the activation of a particle $p$ for each edge $pp'$, it is not possible to have \emph{link}$_{p}[p']$ = \emph{link}$_{p'}[p] = out$.

In the remainder of this paper we only
use the orientation of the edges without referencing their encoding. That is, we say that an edge between two
particles $p$ and $q$ is: directed from $p$ to $q$ (i.e.,
$\overrightarrow{pq}$), directed from $q$ to $p$ (i.e.,
$\overrightarrow{qp}$) or undirected (i.e., $pq$ or $qp$). From Lemma \ref{lem:chirality-detection} particles in a common triangle detect each other's chirality. Since particles know both labels assigned to an edge, particles can compute the orientation of edges in triangles they belong to and check \ref{rule:no-cyclic-triangles}. From now on we only refer to particles detecting cyclic triangles. We prove that when executing our algorithm, any particle system reaches a valid configuration that contains a unique sink.

\begin{theorem}\label{th:algo-correct}
  Starting from an arbitrary simply connected configuration any Gouda fair execution of Algorithm \ref{algo:ss-le} eventually
  reaches a configuration in which no rules can be applied and all of  \ref{rule:all-edges-directed}--\ref{rule:no-cyclic-triangles} are satisfied. Any configuration satisfying all
  \ref{rule:all-edges-directed}--\ref{rule:no-cyclic-triangles} contains a unique sink.
\end{theorem}

\section{Proof of Theorem~\ref{th:certificate} and Theorem~\ref{th:algo-correct}}\label{sec:proof}
Here, we prove Theorem~\ref{th:algo-correct}. Notice
the second statement of Theorem~\ref{th:algo-correct} is precisely
Theorem~\ref{th:certificate}. From Remark \ref{rem:sequential}, we can consider sequential Gouda fair executions in the remainder of the paper. A configuration of a particle system
executing Algorithm~\ref{algo:ss-le} is described by the direction of each edge $pq$ (i.e., $\overrightarrow{pq}$,
$\overrightarrow{qp}$ or undirected). We make a few observations on how to change the orientation of some edges of a valid configuration and maintain a valid configuration.

\begin{observation}\label{obs:edge-change-dir-consecutive}
  Let $p$ be a particle such that \ref{rule:consecutive} is satisfied
  at $p$. Let $e$ be an incoming edge to $p$ and $e'$ be an outgoing
  edge of $p$, s.t. when moving cyclically around $p$, $e$ and
  $e'$ are consecutive. If $e$ becomes outgoing (resp. $e'$ becomes
  incoming), \ref{rule:consecutive} is not violated at $p$.
\end{observation}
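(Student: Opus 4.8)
The plan is to reformulate R\ref{rule:consecutive} combinatorially. Around the six ports of $p$, arranged cyclically, let $O$ denote the set of ports carrying an outgoing edge. Saying that R\ref{rule:consecutive} is satisfied at $p$ means exactly that $O$ is a single contiguous arc, i.e. a set of cyclically consecutive ports with no gap. Since $e$ is incoming, $O$ is a proper subset of the six ports, so this arc has two well-defined endpoints and a nonempty complement of incoming ports. I would first record this reformulation as the working definition to argue with.

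The central structural step I would then establish is that, because $e$ is incoming, $e'$ is outgoing, and $e$ and $e'$ are cyclically consecutive, the port $e'$ must be one of the two endpoints of the contiguous arc $O$, and $e$ is precisely the incoming port immediately adjacent to that endpoint, just outside $O$. This follows because if $e'$ were an interior port of $O$, both of its cyclic neighbours would be outgoing, contradicting that its neighbour $e$ is incoming.

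With this in hand, both flips are immediate. For the first case (flipping $e$ from incoming to outgoing), the new outgoing set is $O \cup \{e\}$; since $e$ sits immediately outside an endpoint of the contiguous arc $O$, this extends the arc by one port on that side, so the result is again a single contiguous arc and R\ref{rule:consecutive} still holds. For the second case (flipping $e'$ from outgoing to incoming), the new outgoing set is $O \setminus \{e'\}$; since $e'$ is an endpoint of the arc, deleting it yields the same arc with one fewer port at that end, which is again contiguous (or empty), so R\ref{rule:consecutive} again holds.

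I expect no genuinely hard step here; the only point needing care is the endpoint argument of the second paragraph, together with the mild bookkeeping that $O$ is a \emph{proper} arc (guaranteed by the existence of the incoming edge $e$), so that ``endpoint'' and ``the port just outside $O$'' remain unambiguous even under cyclic wraparound. I would also note explicitly that the claim concerns only R\ref{rule:consecutive}: the first flip may create a fourth outgoing edge and hence violate R\ref{rule:incoming-edge}, but that lies outside the scope of this observation.
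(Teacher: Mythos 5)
Your argument is correct and complete: identifying the outgoing ports with a proper contiguous cyclic arc, showing $e'$ must be an endpoint of that arc with $e$ the port just outside it, and observing that both flips preserve contiguity is exactly the intended justification. The paper states this as an observation without proof, and your write-up (including the careful handling of the degenerate cases $|O|=1$ and $O\setminus\{e'\}=\emptyset$, and the remark that R\ref{rule:incoming-edge} is out of scope) supplies precisely the missing details.
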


\begin{observation}\label{obs:r1-r2-r4-hold}
  Let $C$ be a configuration and let $p$ be a particle so that
  \ref{rule:all-edges-directed} (resp., \ref{rule:incoming-edge},
  \ref{rule:no-cyclic-triangles}) is satisfied at some particle $q \neq p$
  in $C$. Then, \ref{rule:all-edges-directed} (resp.,
  \ref{rule:incoming-edge}, \ref{rule:no-cyclic-triangles}) is satisfied at $q$ in
  $C \setminus \{p\}$.
\end{observation}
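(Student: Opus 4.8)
The plan is to exploit the single structural change that passing from $C$ to $C \setminus \{p\}$ produces at $q$: the node occupied by $p$ becomes empty, so the only edge incident to $q$ whose status can change is the edge $qp$ (and only when $q$ and $p$ are adjacent). In $C \setminus \{p\}$ this edge leads to an empty node, and by our encoding convention every edge between a particle and an empty node is treated as oriented and incoming for the particle. Every other edge incident to $q$ has both of its endpoints occupied in $C \setminus \{p\}$ exactly as in $C$, and therefore keeps its orientation. With this single change isolated, I would then verify each of the three rules separately, since the observation asserts them one at a time.

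For R\ref{rule:all-edges-directed}, the edge $qp$ becomes an edge to an empty node in $C \setminus \{p\}$ and is hence automatically oriented (incoming) with no possible disagreement, while all other edges of $q$ remain oriented and agreed upon as in $C$; so R\ref{rule:all-edges-directed} still holds at $q$. For R\ref{rule:incoming-edge}, the key remark is that turning $qp$ into an edge to an empty node can only convert an outgoing edge of $q$ into an incoming one (or leave an already incoming edge incoming), and leaves the orientation of every other edge of $q$ untouched. Consequently the number of outgoing edges of $q$ in $C \setminus \{p\}$ is at most its value in $C$, which is at most three; so R\ref{rule:incoming-edge} is preserved.

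For R\ref{rule:no-cyclic-triangles}, I would use the fact that deleting a vertex can only destroy triangles, never create them: every $3$-particle triangle containing $q$ in $C \setminus \{p\}$ does not use $p$, hence is also a $3$-particle triangle containing $q$ in $C$, and its three edges carry identical orientations in both configurations. Since no such triangle was a directed cycle in $C$, none is in $C \setminus \{p\}$, so R\ref{rule:no-cyclic-triangles} is preserved. The only point needing care is the bookkeeping around the empty-node convention — checking that reclassifying $qp$ as an incoming edge to an empty node is consistent with the definitions of R\ref{rule:all-edges-directed} and R\ref{rule:incoming-edge} — but this is immediate from the encoding, so I expect no genuine obstacle. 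It is worth emphasising that R\ref{rule:consecutive} is deliberately omitted from the statement: if $qp$ were an outgoing edge lying strictly between two other outgoing edges of $q$ in the cyclic port order, reclassifying it as incoming would split the block of outgoing edges and break the consecutivity required by R\ref{rule:consecutive}, so the observation cannot be extended to that rule.
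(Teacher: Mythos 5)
Your proof is correct and follows exactly the reasoning the paper leaves implicit (the observation is stated there without proof): the only local change at $q$ when passing from $C$ to $C \setminus \{p\}$ is that the edge $qp$ becomes an edge to an empty node, hence oriented and incoming for $q$ by the encoding convention, which can only decrease the number of outgoing edges of $q$ and cannot create a directed $3$-particle triangle, so R\ref{rule:all-edges-directed}, R\ref{rule:incoming-edge} and R\ref{rule:no-cyclic-triangles} are each preserved. Your closing remark about R\ref{rule:consecutive} also correctly pinpoints why that rule is excluded from the statement: reclassifying an outgoing edge as incoming can split the cyclic block of outgoing edges, which is precisely why the paper handles R\ref{rule:consecutive} separately, case by case, via Observation~\ref{obs:edge-change-dir-consecutive}.
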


Let $\mathcal{S} = C_0, C_1,\ldots$ be an execution of Algorithm
\ref{algo:ss-le} starting from a configuration $C_0$. Notice a
particle $p$ is activable in a configuration $C$ if when it executes
Algorithm~\ref{algo:ss-le}, its undirected edges
become outgoing or its outgoing edges become undirected. If
there exists a configuration $C_f$ where no particle is activable, then
$C_f = C_j$ for all $j > f$, and we say that the execution stabilises
to a \textit{final} configuration. If all rules are satisfied in this
final configuration, then this configuration is valid and we say it is
a \emph{final directed} configuration. If a configuration $C_i$ is not
final, we can assume that there exists an activable particle $p_i$
such that we obtain $C_{i+1}$ by activating $p_i$ in $C_i$.  As mentioned before, we consider finite systems of particles where the number of possible configurations is finite. Hence there exists an index $i_0$ in $\mathcal{S}$
such that any configuration $C_{i}$ with $i \geq i_0$ appears infinitely often
in $\mathcal{S}$. We write
$\mathcal{S}_{i_0} = C_{i_0}, C_{i_0+1},\ldots$ to denote the part of
the execution starting at $C_{i_0}$ and in the following we consider only 
$\cS_{i_0}$ and configurations $C_i$ with $i \geq i_0$. 
We call the edges that are
never undirected in $\mathcal{S}_{i_0}$, \textit{stable
edges}. Observe that by the definition of $i_0$, each edge is either
stable or undirected infinitely often. Notice that any edge $pq$ directed
from $p$ to $q$ in $C_{i}$ with $i \geq i_0$, is directed
from $p$ to $q$ infinitely often, regardless of whether it is stable. 

We first establish a few properties satisfied in $\cS_{i_0}$.

\begin{lemma}\label{lemma:hated-lemma}
  In any configuration $C_i$ with $i \geq i_0$, Rules
  \ref{rule:incoming-edge}, \ref{rule:consecutive}
  and \ref{rule:no-cyclic-triangles} are always satisfied between particle activations.
\end{lemma}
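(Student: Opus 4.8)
I would reduce the lemma to two local facts about a single activation and then run a counting/fairness argument. The first fact is that \emph{after any activation of a particle $p$, all three rules R\ref{rule:incoming-edge}, R\ref{rule:consecutive}, R\ref{rule:no-cyclic-triangles} hold at $p$}. Indeed, the first line of Algorithm~\ref{algo:ss-le} leaves $p$ with no undirected edges, after which either the guard of the second line is false (so R\ref{rule:incoming-edge}--R\ref{rule:no-cyclic-triangles} already hold at $p$ by definition) or it fires and undirects \emph{all} outgoing edges of $p$; in the latter case $p$ ends with zero outgoing edges, which trivially satisfies R\ref{rule:incoming-edge} and R\ref{rule:consecutive}, and forbids any cyclic triangle through $p$ (a cyclic triangle needs an outgoing edge at each of its vertices), so R\ref{rule:no-cyclic-triangles} holds at $p$ as well. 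The second fact is that \emph{the edges outgoing from $p$ are recorded in $p$'s own register and are therefore changed only when $p$ is activated}: a neighbour only writes its own register, so it can turn an incoming edge of $p$ into an undirected one or vice versa, but never create an outgoing edge at $p$. Consequently R\ref{rule:incoming-edge} and R\ref{rule:consecutive}, being functions of $p$'s outgoing set alone, are untouched between activations of $p$.

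\textbf{Handling R\ref{rule:no-cyclic-triangles}.} This is the one rule that is \emph{not} local, and I expect it to be the main obstacle: activating a neighbour can change the edge opposite $p$ in a triangle and thereby affect whether a triangle through $p$ is cyclic. The key observation to overcome this is that an activation of a particle $a$ changes only edges incident to $a$, hence affects only triangles containing $a$; triangles avoiding $a$ keep all their edges and their cyclicity. Moreover, by the first fact, at the end of $a$'s activation no triangle through $a$ is cyclic: if the first line happened to complete a directed triangle through $a$, then R\ref{rule:no-cyclic-triangles} is violated at $a$ (which $a$ can test, since by Lemma~\ref{lem:chirality-detection} it infers the orientation of every triangle it belongs to), so the second line fires and destroys it. Combining these, an activation can never increase the number of cyclic triangles in the configuration.

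\textbf{The counting argument.} For $k \in \{\ref{rule:incoming-edge}, \ref{rule:consecutive}, \ref{rule:no-cyclic-triangles}\}$ I define a violation count on configurations: for R\ref{rule:incoming-edge} and R\ref{rule:consecutive}, the number of particles at which the rule fails; for R\ref{rule:no-cyclic-triangles}, the number of cyclic triangles. By the facts above, a single activation of $a$ leaves the contribution of every particle other than $a$ unchanged and makes $a$ (resp. every triangle through $a$) non-violating, so each count is non-increasing along $\mathcal{S}$ and strictly decreases exactly when $a$ is itself a violator in the rule considered. Being a non-increasing sequence of non-negative integers, each count is eventually constant; since every configuration $C_i$ with $i \geq i_0$ occurs infinitely often, hence arbitrarily late, and since each count is a function of the configuration, every $C_i$ with $i \geq i_0$ carries this common eventual value $K_k$.

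\textbf{Conclusion.} It remains to show $K_{\ref{rule:incoming-edge}} = K_{\ref{rule:consecutive}} = K_{\ref{rule:no-cyclic-triangles}} = 0$. Suppose some $K_k > 0$. Then $C_{i_0}$ contains a particle $p$ that violates the rule (for R\ref{rule:no-cyclic-triangles}, take any vertex of a cyclic triangle). Such a $p$ has at least one outgoing edge and $\neg$R\ref{rule:incoming-edge}$\lor\neg$R\ref{rule:consecutive}$\lor\neg$R\ref{rule:no-cyclic-triangles} still holds at $p$ after the first line, so the second line would change its register; hence $p$ is activable in $C_{i_0}$. As $C_{i_0}$ occurs infinitely often, sequential Gouda fairness guarantees that the sequential successor $C'$ obtained by activating $p$ also occurs infinitely often, so $C'$ lies in the recurrent tail and must carry the count $K_k$. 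But activating a violator strictly decreases the count, so $C'$ carries count $K_k - 1 < K_k$, a contradiction. Therefore all three counts vanish on the tail, i.e.\ R\ref{rule:incoming-edge}, R\ref{rule:consecutive} and R\ref{rule:no-cyclic-triangles} are satisfied in every configuration $C_i$ with $i \geq i_0$.
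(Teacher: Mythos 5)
Your proof is correct and follows essentially the same route as the paper's: a violation count that is non-increasing under a single activation (the activated particle becomes rule-satisfying and, by locality, no other particle becomes violating), forced to be constant on the tail because every $C_i$ with $i \geq i_0$ reappears infinitely often, and then driven to zero by Gouda fairness applied to an activable violator. The one wobble is your claim that $\neg$R\ref{rule:incoming-edge}$\lor\neg$R\ref{rule:consecutive}$\lor\neg$R\ref{rule:no-cyclic-triangles} ``still holds at $p$ after the first line'' --- this can fail (line 1 may turn an undirected edge outgoing and thereby close a gap, restoring R\ref{rule:consecutive}) --- but it is harmless: $p$ is activable anyway (if it has an undirected edge, line 1 already changes its register; otherwise line 1 is a no-op and line 2 fires on the surviving violation), and your Fact 1 alone yields the strict decrease of the count in the successor.
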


\begin{proof}
  Let $n_i$ be the number of particles in $C_i$ that do not satisfy
  \ref{rule:incoming-edge}, \ref{rule:consecutive} or
  \ref{rule:no-cyclic-triangles}.  If a particle $p$ is activated in
  $C_i$, then \ref{rule:incoming-edge}, \ref{rule:consecutive} and
  \ref{rule:no-cyclic-triangles} are satisfied at $p$ in
  $C_{i+1}$. Moreover, if \ref{rule:incoming-edge},
  \ref{rule:consecutive}, and \ref{rule:no-cyclic-triangles} are
  satisfied at some particle $p$ in $C_i$ that is not activated at
  step $i$, then they are still satisfied at step
  $C_{i+1}$. Consequently, $n_{i+1} \leq n_i$.  Since for
  $i \geq i_0$, $C_i$ appears infinitely often, we get that for every
  $i \geq i_0$, we have $n_i = n_{i_0}$. If $n_{i_0} > 0$, there exists a
  particle $p$ that violates one of the rules
  \ref{rule:incoming-edge}, \ref{rule:consecutive} or
  \ref{rule:no-cyclic-triangles} in $C_i$ for all $i \geq i_0$. Thus, $p$ is eventually activated
  at some step $i \geq i_0$ and in $C_{i+1}$, $p$ satisfies the rules, a
  contradiction. Consequently, for any $i \geq i_0$,
  \ref{rule:incoming-edge}, \ref{rule:consecutive} and
  \ref{rule:no-cyclic-triangles} are satisfied at every particle in
  $C_i$.
\end{proof}

\begin{lemma}\label{lemma:stable-particles}
  If a particle $p$ is incident to a stable outgoing edge, $p$ is
  never activable in $\mathcal{S}_{i_0}$ and all edges incident to $p$
  are stable edges.
\end{lemma}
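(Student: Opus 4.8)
The plan is to exploit the fact, established in Lemma~\ref{lemma:hated-lemma}, that R\ref{rule:incoming-edge}, R\ref{rule:consecutive} and R\ref{rule:no-cyclic-triangles} hold in every configuration $C_i$ with $i\geq i_0$. First I would record the resulting characterisation of activability: in such a configuration a particle is activable exactly when it is incident to an undirected edge, since the second line of Algorithm~\ref{algo:ss-le} can only fire (undirecting all outgoing edges) if the first line has just created a violation of one of R\ref{rule:incoming-edge}--R\ref{rule:no-cyclic-triangles}, and no undirected edge means the first line does nothing. The second preliminary observation is that $p$ can never execute the undirecting step: doing so would turn all of $p$'s outgoing edges undirected, in particular the stable outgoing edge $\overrightarrow{pq}$, contradicting that $\overrightarrow{pq}$ is never undirected in $\cS_{i_0}$.

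Then, for the first assertion, I argue by contradiction: suppose $p$ is activable in some $C_i$ with $i\geq i_0$. By the characterisation, $p$ is incident to an undirected edge $pr$ in $C_i$. Since $C_i$ occurs infinitely often, sequential Gouda fairness lets me assume $p$ is actually activated out of a copy of $C_i$, at some step $j$ with $C_j=C_i$. By the second observation this activation is not an undirecting one, so it marks the undirected edge outgoing, giving $\overrightarrow{pr}$ in $C_{j+1}$. The heart of the argument is to show that $\overrightarrow{pr}$ can never again become undirected: the algorithm never flips an oriented edge directly (orientation can only change by first passing through the undirected state), $r$ cannot undirect $\overrightarrow{pr}$ because its undirecting activation only affects $r$'s own outgoing edges while $\overrightarrow{pr}$ is incoming for $r$, and $p$ cannot undirect it because $p$ performs no undirecting activation. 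Hence $\overrightarrow{pr}$ is oriented in every configuration after step $j$, so $pr$ is undirected only finitely often; but $pr$ is undirected in $C_i=C_j$, which recurs infinitely often, so $pr$ is undirected infinitely often --- a contradiction. Therefore $p$ is never activable in $\cS_{i_0}$.

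The second assertion then follows immediately: being never activable, $p$ is incident to no undirected edge in any $C_i$ with $i\geq i_0$, so every edge incident to $p$ is oriented throughout $\cS_{i_0}$ and is therefore stable.

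I expect the main obstacle to be the bookkeeping in the middle step --- verifying that once $\overrightarrow{pr}$ is created it is genuinely frozen. This needs the precise edge-update semantics of Algorithm~\ref{algo:ss-le}: that an activation only rewrites edges incident to the activated particle, that it can only send undirected$\to$outgoing or outgoing$\to$undirected (never a direct reorientation), and that undirecting is an all-or-nothing operation on a particle's outgoing edges. The stability of $\overrightarrow{pq}$ is what rules out $p$ itself ever undirecting, and the in/out asymmetry of $\overrightarrow{pr}$ at $r$ is what rules out $r$; combining these with the ``undirected infinitely often'' dichotomy for non-stable edges is the crux.
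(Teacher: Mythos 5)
Your proposal is correct and follows essentially the same route as the paper: fairness forces $p$ to be activated while the putative unstable edge $pr$ is undirected, the stable outgoing edge $\overrightarrow{pq}$ rules out the undirecting step at $p$, and the resulting $\overrightarrow{pr}$ is frozen thereafter, contradicting the recurrence of configurations in which $pr$ is undirected. The only cosmetic difference is that you derive non-activability first and stability of the incident edges as a corollary, whereas the paper argues stability of all incident edges directly; the underlying mechanism is identical.
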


\begin{proof}
  In a configuration $C_i$, if a particle $p$ is incident
  to an outgoing edge and an undirected edge, then $p$ is activable in
  $C_i$. After its activation, either all the undirected edges
  incident to $p$ have become outgoing edges, or all outgoing edges of
  $p$ have become undirected.
   
  Let $\overrightarrow{pq}$ be a stable edge, hence, $p$ never marks
  $\overrightarrow{pq}$ as undirected. Let us suppose that in addition
  to $\overrightarrow{pq}$, $p$ is also incident to an unstable edge
  $pr$. Then infinitely often, $pr$ is undirected and thus there
  exists a step where $p$ is activated and $pr$ is undirected. At this
  step, $p$ marks $\overrightarrow{pr}$ as outgoing. Then if
  $\overrightarrow{pr}$ becomes undirected at a later step,
  $\overrightarrow{pq}$ must also become undirected, which is a
  contradiction. Hence, all edges incident to $p$ are stable.
\end{proof}

\begin{lemma}\label{lemma:unloved-lemma}
  If a particle $p$ is incident to an unstable edge in $\mathcal{S}_{i_0}$, the unstable
  edges incident to $p$ are at least two and do not appear consecutively around
  $p$, or there are at least four unstable edges incident to $p$. 
\end{lemma}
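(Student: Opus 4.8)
The plan is to argue by contradiction, ruling out exactly the configurations excluded by the statement. Writing $k$ for the number of unstable edges incident to $p$, the conclusion fails only when $k=1$, or when $k\in\{2,3\}$ and the unstable edges form a single consecutive arc around $p$; equivalently, when the unstable edges incident to $p$ form a consecutive block $B$ with $|B|\le 3$. So I assume this and aim for a contradiction. First I would record the local picture: since $p$ is incident to an unstable edge, Lemma~\ref{lemma:stable-particles} forbids $p$ from having any stable outgoing edge, so every edge incident to $p$ is either unstable (hence in $B$) or a \emph{stable incoming} edge (this includes the edges towards empty nodes). In particular, every non-$B$ edge at $p$ is incoming in \emph{every} configuration $C_i$ with $i\ge i_0$.

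The target is a configuration $C^\star$ of $\cS_{i_0}$ in which all edges of $B$ are outgoing from $p$, and I first check that such a $C^\star$ would contradict the instability of $B$. Indeed, in $C^\star$ every edge at $p$ is directed (those in $B$ outgoing, the rest incoming), so R\ref{rule:all-edges-directed} holds at $p$; R\ref{rule:incoming-edge} holds since $|B|\le 3$; and R\ref{rule:consecutive} holds since $B$ is a consecutive arc. Together with R\ref{rule:no-cyclic-triangles} (handled below), $p$ is then not activable in $C^\star$. After $C^\star$, $p$ can become activable again only if one of its incident edges becomes undirected; but an outgoing edge of $B$ can be undirected only by $p$ itself, while the non-$B$ edges are stable incoming and hence never undirected. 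Thus $p$ is never activated after $C^\star$, the edges of $B$ stay outgoing forever, and they are therefore stable --- contradicting the instability of $B$ (and, via Lemma~\ref{lemma:stable-particles}, even forcing all edges of $p$ to be stable). So it suffices to show $C^\star$ is reachable from some recurring configuration, since Gouda fairness then guarantees that $C^\star$ itself recurs in $\cS_{i_0}$.

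The only obstruction to reaching $C^\star$ is R\ref{rule:no-cyclic-triangles}: when $p$ tries to orient all of $B$ outwards, it undirects everything again if this closes a directed triangle. This is the main difficulty. A directed triangle can appear only at the two \emph{ends} of the block: for an internal pair $p\,v_j\,v_{j+1}$ with $v_j,v_{j+1}\in B$ both edges point out of $p$ and no cycle forms, so the only dangerous triangles are $p\,v_i\,v_{i-1}$ and $p\,v_{i+k-1}\,v_{i+k}$, where $v_i,\dots,v_{i+k-1}$ are the neighbours of $p$ spanned by $B$ and $v_{i-1},v_{i+k}$ are its stable-incoming flanking neighbours. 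Treating the left end (the right is symmetric), the orientation that would close $p\to v_i\to v_{i-1}\to p$ is $\overrightarrow{v_i v_{i-1}}$, an edge \emph{outgoing} from $v_i$. The key observation is that this bad orientation can never be \emph{stable}: a stable $\overrightarrow{v_i v_{i-1}}$ would be a stable outgoing edge at $v_i$, so by Lemma~\ref{lemma:stable-particles} all edges at $v_i$ --- in particular $p v_i\in B$ --- would be stable, contradicting the instability of $B$. Hence each flanking edge is either stably oriented \emph{into} the block end (harmless) or itself unstable, in which case it is undirected infinitely often and can be driven to an undirected (harmless) state.

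Finally I would assemble the reachability of $C^\star$: starting from a recurring configuration, use Gouda fairness to first drive both flanking edges to a non-bad state, then bring all edges of $B$ to undirected, and then activate $p$, which makes all of $B$ outgoing without violating R\ref{rule:incoming-edge}, R\ref{rule:consecutive} or R\ref{rule:no-cyclic-triangles}. The step I expect to require the most care is precisely this scheduling: activating the flanking particles to repair their edges may momentarily re-orient edges of $B$, so the activations must be ordered --- exploiting that each relevant edge, being unstable, is undirected infinitely often --- so that the non-bad flanking state and the all-undirected state of $B$ hold simultaneously at the instant $p$ is activated. Once $C^\star$ is reached, the contradiction established above closes the proof.
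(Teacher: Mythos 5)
Your strategy --- force $p$ to orient its consecutive block $B$ of at most three unstable edges outward and argue that this state persists, making $B$ stable --- is the right idea, and your observation that a flanking edge cannot be \emph{stably} oriented out of the block end is a correct and necessary step. But there is a genuine gap in the persistence argument. You assert that after $C^\star$ the particle $p$ can become activable only if one of its \emph{incident} edges becomes undirected. That is false: $p$ also becomes activable if R\ref{rule:no-cyclic-triangles} becomes violated at $p$, and the edge that closes the dangerous triangle at a block end, $v_{i-1}v_i$, is an edge \emph{between two neighbours} of $p$, not incident to $p$. Your own case analysis explicitly allows this flanking edge to be unstable; if it were, it would be reoriented infinitely often, so after $C^\star$ a configuration with $\overrightarrow{v_iv_{i-1}}$ would be reachable from a recurring configuration (hence, by Gouda fairness, reached), R\ref{rule:no-cyclic-triangles} would then fail at $p$, and $p$ would undirect all of $B$ --- destroying exactly the stability your contradiction needs. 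The same unresolved possibility is why your final scheduling step (``drive both flanking edges to a non-bad state \ldots simultaneously'') is delicate and is left as a sketch.

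The missing ingredient is that the unstable-flanking-edge case cannot occur, and you already have the tools to see this. The flanking neighbour $v_{i-1}$ is joined to $p$ by a stable edge, which by Lemma~\ref{lemma:stable-particles} must be the stable \emph{outgoing} edge $\overrightarrow{v_{i-1}p}$; applying Lemma~\ref{lemma:stable-particles} at $v_{i-1}$ then makes \emph{every} edge incident to $v_{i-1}$ stable, in particular $v_{i-1}v_i$, which by your own non-stability argument must be stably directed $\overrightarrow{v_{i-1}v_i}$. (This is precisely Claim~\ref{claim:unloved-claim} in the paper; if $v_{i-1}$ is an empty node there is nothing to check.) With this in hand, R\ref{rule:no-cyclic-triangles} can never become violated at $p$ once $B$ points outward, the persistence of $C^\star$ is secured, and the scheduling difficulty essentially disappears. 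The paper's proof follows this route but works only with the \emph{middle} edge $pq$ of the block rather than with all of $B$: it shows that $\overrightarrow{pq}$ can never be undirected once present (no rule can fail at $p$ at that moment), yet must appear by fairness after $q$ undirects its outgoing edges --- which also avoids your extra burden of getting all of $B$ simultaneously undirected at the instant $p$ is activated.
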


\begin{proof}
  Suppose the lemma does not hold, implying that there exists a particle $p$
  incident to $1 \leq k \leq 3$ unstable edges that appear
  consecutively around $p$. Then, there exists an unstable edge $pq$
  incident to $p$ such that for every unstable edge $pr$ incident to
  $p$, either $r = q$ or $r$ is adjacent to $q$.  Note that by
  Lemma~\ref{lemma:stable-particles}, all stable edges incident to $p$
  are incoming to $p$.
  
  \begin{claim}\label{claim:unloved-claim}
    Consider an unstable edge $pr$ with $r \neq q$ and let $s$ be the
    common neighbour of $p$ and $r$ that is distinct from $q$. Then
    $\overrightarrow{sp}$ and $\overrightarrow{sr}$ are stable.
  \end{claim}

  \begin{proof}
    Since $s$ is not adjacent to $q$, ${sp}$ is stable, and by
    Lemma~\ref{lemma:stable-particles}, $sp$ is oriented from $s$ to
    $p$. By Lemma~\ref{lemma:stable-particles} applied at $s$ and $r$,
    $sr$ is also stable and it is directed from $s$ to $r$.
  \end{proof}

  Suppose first there is a configuration $C_i$ with $i\geq i_0$
  such that $pq$ is directed from $p$ to $q$ in $C_i$ and undirected
  in $C_{i+1}$. This implies that $p$ is activated at step $i$. By Lemma~\ref{lemma:hated-lemma} there exists at least one undirected
  edge $pr$ in $C_i$, and from the second condition of Algorithm \ref{algo:ss-le}, when orienting all undirected edges incident
  to $p$ in $C_i$ as outgoing edges, one of
  \ref{rule:incoming-edge}, \ref{rule:consecutive},
  \ref{rule:no-cyclic-triangles} is violated. Since $q$ is neighbouring to all other endpoints of unstable edges around $p$ (if more unstable edges than $pq$ exist around p), this cannot be \ref{rule:incoming-edge} or
  \ref{rule:consecutive}.  If \ref{rule:no-cyclic-triangles} is
  violated, it implies that in $C_i$, there exists an undirected edge
  $pr$ and directed edges $\overrightarrow{rs}$ and
  $\overrightarrow{sp}$. By Claim~\ref{claim:unloved-claim}, $s = q$
  but this is impossible since $\overrightarrow{pq}$ is in $C_i$.

  Then, at each step $i \geq i_0$, either $qp$ is undirected or it is
  directed from $q$ to $p$. If there is no step $i \geq i_0$ where $q$
  is activated, then $q$ never has any outgoing edge, $qp$ is always
  undirected and we let $i_1 = i_0$. Otherwise, consider a step
  $i_1-1$ where $q$ is activated such that in $C_{i_1}$, $qp$ is
  undirected. Then at step $i_1$, all edges incident to $q$ are either
  incoming or undirected. We claim that if we activate $p$ at step
  $i_1$, it orients $pq$ from $p$ to $q$. Indeed by the definition of
  $q$, rules \ref{rule:incoming-edge} and \ref{rule:consecutive} are
  satisfied when $p$ orients all its outgoing edges. By
  Claim~\ref{claim:unloved-claim}, any triangle violating
  \ref{rule:no-cyclic-triangles} should contain $q$, but this is
  impossible since $q$ has no outgoing edges in
  $C_{i_1}$. So, by the fairness condition, there exists a
  configuration $C_i$ containing $\overrightarrow{pq}$, a
  contradiction.
\end{proof}

We now prove Theorem~\ref{th:algo-correct} using the structure of the
boundary of $\cP$ given by Lemma~\ref{lemma:boundaryparticles}. Informally, the proof has the following structure. We assume that it is possible that the system does not stabilise and we will arrive at a contradiction. Out of the particle systems that do not stabilise to a configuration that satisfies all rules and has a unique sink, we take a particle system with the minimum number of particles. On the boundary of that system there exists a particle $p$ satisfying one of the cases of Lemma \ref{lemma:boundaryparticles}. For each orientation of the edges incident to $p$ we show that the edges incident to $p$ are stable. Then we take a smaller system that contains exactly one less particle, $p$. We show that the execution in both systems for particles that are not $p$ is the same and as a result, if the system that contains $p$ does not satisfy all rules and does not have a unique sink, then the same is true for the system that does not contain $p$. Since we had assumed that the system containing $p$ is the minimum size system that does not stabilise to a valid configuration, a smaller system not stabilising is a contradiction. 

 \begin{proof}[Proof of Theorem~\ref{th:algo-correct}]

  Let us suppose that there exists a fair execution
  $\cS=C_0,C_1,\ldots$ on a particle configuration $C=C_0$ that does
  not stabilise to a final directed configuration containing a
  unique sink. Consider such an execution $\cS$ with a support $\cP$
  of minimum size.  As defined above, consider a fair execution
  $\mathcal{S}_{i_0}= C_{i_0},C_{i_0+1},\ldots$ containing only
  configurations appearing infinitely often. By Lemma~\ref{lemma:boundaryparticles}, we can assume that the
  boundary of $\cP$ contains either a pending particle, or a
  $60^\circ$ particle, or two $120^\circ$ particles that are connected
  by a path of $180^\circ$ particles on the boundary. In the
  following, we show that each of these cases cannot occur. We first
  consider the case where $\cP$ contains a pending particle.

  \begin{lemma}\label{lemma:pending}
    If $\cP$ contains a pending particle $p$ (i.e., a particle with
    only one neighbouring particle $w$), all edges are stable in
    $\cS_{i_0}$ and there is a unique sink in the final configuration.
  \end{lemma}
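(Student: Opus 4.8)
The plan is to handle the pending particle $p$ together with its unique neighbour $w$, and exploit the minimality of $\cP$ by deleting $p$ and running the induction hypothesis on $\cP' = \cP \setminus \{p\}$. The single edge $pw$ must be oriented; $p$ itself can only be incident to this one edge (plus edges to empty nodes, which are by convention incoming to $p$). So the only question is whether $pw$ is stable, and in which direction it settles. **First I would** argue that $pw$ is eventually stable. Since $p$ has a single particle-neighbour, $p$ can have at most one outgoing edge, so R\ref{rule:incoming-edge} and R\ref{rule:consecutive} can never be violated at $p$ on account of $pw$; and since a single edge cannot close a triangle by itself from $p$'s side, the only way $pw$ could be undirected infinitely often is if it participates in a cyclic triangle $pwx$. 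I would therefore split on whether $p$ lies in a triangle $pwx$ at all. If $p$ is pending it has exactly one neighbour $w$, so $p$ is in no triangle, R\ref{rule:no-cyclic-triangles} is vacuous at $p$, and once $pw$ is oriented out of $p$ it is never undirected — by Lemma~\ref{lemma:stable-particles} it is then stable and all edges at $p$ are stable.

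**Next I would** pin down the direction of $pw$. If $w$ is ever activated with $pw$ undirected, $p$ will subsequently orient $pw$ as outgoing, and then by the reasoning above $\overrightarrow{pw}$ becomes stable. The remaining subtle point is to rule out the degenerate possibility that $pw$ oscillates because of $w$'s behaviour; but by the same Lemma~\ref{lemma:stable-particles} argument used in Lemma~\ref{lemma:unloved-lemma}, an edge incident to $p$ that is undirected infinitely often would force $p$ to have at least two non-consecutive unstable edges or at least four unstable edges, which is impossible since $p$ has only one particle-edge. Hence $pw$ is stable, oriented $\overrightarrow{pw}$, and $p$ is a source (never a sink).

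**Then I would** invoke minimality. By Observation~\ref{obs:r1-r2-r4-hold}, deleting $p$ preserves R\ref{rule:all-edges-directed}, R\ref{rule:incoming-edge}, R\ref{rule:no-cyclic-triangles} at every remaining particle, and deleting a pending particle cannot disconnect $\cP$, so $\cP'$ is again simply connected with one fewer particle. Since all edges at $p$ are stable and $\overrightarrow{pw}$ never changes, the activations of particles in $\cP'$ proceed exactly as in $\cP$ (the stable incoming edge $\overrightarrow{pw}$ at $w$ acts as a fixed boundary condition that never triggers a rule at $w$). Thus the induced execution $\cS'_{i_0}$ on $\cP'$ is a valid fair execution, and if $\cP$ failed to stabilise to a unique sink then so would $\cP'$ — contradicting the minimality of $\cP$. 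Consequently $\cP'$ stabilises with a unique sink $s \neq p$ (note $p$ is a source, so it is never the sink, and $w$ retains its incoming edge from $p$ without affecting which particle is the sink), and adding back the source $p$ leaves $s$ the unique sink of $\cP$. **The main obstacle** I expect is the bookkeeping needed to show the two executions agree step-for-step: one must verify that the fixed edge $\overrightarrow{pw}$ never makes $w$ activable in $\cP$ in a way that has no counterpart in $\cP'$, i.e. that $w$'s activation rule depends on its edges in exactly the same way whether or not the already-stable incoming edge $\overrightarrow{pw}$ is present.
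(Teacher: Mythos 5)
There is a genuine gap: you only treat one of the two possible stable orientations of $pw$, and the one you drop is precisely the case in which $p$ becomes the leader. Your first step (that $pw$ is stable) is fine and matches the paper — Lemma~\ref{lemma:unloved-lemma} applied to a particle with a single particle-edge forces stability. But your next step, concluding that the stable orientation must be $\overrightarrow{pw}$ so that ``$p$ is a source (never a sink)'', is unjustified. Stability only means the edge is never undirected in $\cS_{i_0}$; it says nothing about which endpoint won the race to orient it. If in $C_{i_0}$ the edge is already $\overrightarrow{wp}$ and $w$ never violates a rule (which is entirely possible: $w$ may have at most three consecutive outgoing edges and no cyclic triangle), then $\overrightarrow{wp}$ is the stable orientation and $p$ is a sink throughout $\cS_{i_0}$. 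Your sentence ``if $w$ is ever activated with $pw$ undirected, $p$ will subsequently orient $pw$ as outgoing'' gets the dynamics backwards — whichever endpoint is activated first while the edge is undirected orients it \emph{away from itself}, so an activation of $w$ produces $\overrightarrow{wp}$, not $\overrightarrow{pw}$.

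This omission breaks your final step. In the $\overrightarrow{wp}$ case, deleting $p$ removes the \emph{unique outgoing} edge of $w$ (by R\ref{rule:consecutive}, since the two common neighbours of $p$ and $w$ are empty, $p$ is the only out-neighbour of $w$), so the unique sink of $\cP\setminus\{p\}$ guaranteed by minimality is $w$ itself — and $w$ is \emph{not} a sink of $\cP$. The correct conclusion there is that $p$ is the unique sink of $C_{i_0}$, not that ``adding back the source $p$ leaves $s$ the unique sink''. The paper's proof is exactly your argument in the $\overrightarrow{pw}$ case, plus a symmetric second case for $\overrightarrow{wp}$ in which one checks that $w$ is never activable in either system (it has a stable outgoing edge in $\cP$, and only incoming edges in $\cP\setminus\{p\}$, so by Observations~\ref{obs:edge-change-dir-consecutive} and~\ref{obs:r1-r2-r4-hold} no rule fires at $w$), identifies $w$ as the unique sink of the reduced system, and then transfers the sink to $p$ when $p$ is restored. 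You need to add that case.
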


  \begin{proof}
    By Lemma~\ref{lemma:unloved-lemma}, the edge $pw$ is stable.
    Suppose first that $pw$ is directed from $p$ to $w$ in
    $\cS_{i_0}$.  For each $i \geq i_0$, let
    $C'_i = C_i \setminus \{p\}$ and consider the sequence of
    configurations
    $\mathcal{S}'_{i_0} = C'_{i_0}, C'_{i_0+1}, \ldots, C'_{i},
    \ldots$. Observe that for each $i \geq i_0$, either $C_{i+1} = C_i$
    or there exists $p_i$ such that $C_{i+1}$ is obtained from $C_i$
    by activating $p_i$ and thus modifying the orientations of edges
    incident to $p_i$. Since $\overrightarrow{pw}$ is stable, by
    Lemma~\ref{lemma:stable-particles}, for any $i \geq i_0$,
    $p_i \neq p$. Moreover, for each $i \geq i_0$, the edges of $C_i'$
    have the same orientation as in $C_{i}$. So, $p' \neq p$ is
    activable in $C_i'$ if and only if it is activable in
    $C_i$. Furthermore, the configuration obtained by activating $p_i$
    in $C_i'$ is precisely $C_{i+1}'$ since the edges of $C_i'$ have
    the same orientation as in $C_{i}$. Hence,
    $\mathcal{S}'_{i_0}$ is a fair execution of Algorithm
    \ref{algo:ss-le} on $\mathcal{P}\setminus \{p\}$. By the
    minimality of the size of $\mathcal{P}$, there exists a step
    $i_1 \geq i_0$ such that $C_{i_1}'$ is a final configuration
    where all edges are oriented and that contains a unique sink
    $p''$. Since the edges incident to $p$ are stable, $C_{i_1}$ is a
    final configuration where all edges are oriented. By our
    definition of $i_0$, this implies that $i_1 = i_0$. Since $p$ has
    an outgoing edge, $\overrightarrow{pw}$, in $C_{i_1} = C_{i_0}$,
    $p$ is not a sink of $C_{i_0}$ and $p''$ is the
    unique sink in $C_{i_0}$.

    Suppose now that $\overrightarrow{wp}$ is stable. Notice that in
    this case, $p$ is a sink in $C_{i}$ for each $i \geq
    i_0$. Moreover, since $\overrightarrow{wp}$ is stable, by
    Lemma~\ref{lemma:stable-particles}, $w$ is never activated. Since
    the two common neighbours of $p$ and $w$ are empty, by
    \ref{rule:consecutive}, $p$ is the only outgoing neighbour of $w$
    in $C_i$ for any $i \geq i_0$. Consequently, $w$ is a sink in
    $C_i\setminus \{p\}$ for any $i \geq i_0$.  For each $i \geq i_0$,
    let $C'_i = C_i \setminus \{p\}$ and consider the sequence of
    configurations
    $\mathcal{S}'_{i_0} = C'_{i_0}, C'_{i_0+1}, \ldots, C'_{i},
    \ldots$. Observe that for each $i \geq i_0$, either $C_{i+1} = C_i$
    or there exists $p_i$ such that $C_{i+1}$ is obtained from $C_i$
    by activating $p_i$ and thus modifying the orientations of edges
    incident to $p_i$. Since $p$ has only incoming edges in $C_i$,
    $p_i \neq p$. Moreover, since $\overrightarrow{wp}$ is stable,
    $p_i \neq w$.  Furthermore, for each $i \geq i_0$, the edges of
    $C_i'$ have the same orientation as in $C_{i}$. Consequently,
    $p' \neq p$ is activable in $C_i'$ if and only if it is activable in
    $C_i$. Furthermore, the configuration obtained by activating $p_i$
    in $C_i'$ is precisely $C_{i+1}'$ since the edges of $C_i'$ have
    the same orientation as in $C_{i}$. Consequently,
    $\mathcal{S}'_{i_0}$ is a fair execution of Algorithm
    \ref{algo:ss-le} on $\mathcal{P}\setminus \{p\}$. By the
    minimality of the size of $\mathcal{P}$, there exists a step
    $i_1 \geq i_0$ such that $C_{i_1}'$ is a final configuration where
    all edges are oriented and that contains a unique sink $p''=w$ in
    $C_{i_1}'$. Since $\overrightarrow{wp}$ is stable, $C_{i_1}$ is a
    final configuration where all edges are oriented.  By our
    definition of $i_0$, this implies that $i_1 = i_0$. Since any
    $p' \in C_{i_0}\setminus \{p,w\}$ is not a sink in $C'_{i_0}$, and
    since $\overrightarrow{wp}$ is in $C_{i_0}$, $p$ is the unique
    sink in the valid configuration $C_{i_0}$.
  \end{proof}

  We now consider the case where the boundary of $\cP$ contains a
  $60^\circ$ particle $p$, and we let $q$ and $r$ be the two
  neighbours of $p$ on the boundary of $\cP$.

  \begin{lemma}\label{lemma:60-particles-stabilise}
    If $\cP$ contains a $60^\circ$ particle $p$, all edges are stable
    and there is a unique sink in the final configuration.
  \end{lemma}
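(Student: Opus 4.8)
The plan is to mirror the strategy of Lemma~\ref{lemma:pending}: show that the two edges incident to $p$ are stable, delete $p$ to obtain a strictly smaller simply connected system, invoke the minimality of $\cP$, and then read off the unique sink of $\cP$ from that of $\cP\setminus\{p\}$. First I would record the local geometry of the $60^\circ$ corner: $p$ has exactly two occupied neighbours $q$ and $r$, they are adjacent (so $pqr$ is a triangle), the two ports of $p$ towards $q$ and $r$ are consecutive, and the remaining four neighbours of $p$ are empty. Thus $p$ carries only the two consecutive edges $pq$ and $pr$, so it can have neither two non-consecutive unstable edges nor four unstable edges; by Lemma~\ref{lemma:unloved-lemma} both $pq$ and $pr$ are therefore stable, and $p$ is never activable in $\cS_{i_0}$ (by Lemma~\ref{lemma:stable-particles} when $p$ has an outgoing edge, and directly when $p$ is a sink). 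I would also note that $\cP\setminus\{p\}$ is simply connected, since $p$ is a degree-two corner whose two neighbours are adjacent, so deleting it neither disconnects $\cP$ nor merges distinct regions of $G_\Delta\setminus\cP$.

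Next comes the reduction. For $i\ge i_0$ I set $C_i'=C_i\setminus\{p\}$. Because $p$ is never activated, every step of $\cS_{i_0}$ activates some $p_i\neq p$, and all edges not incident to $p$ carry the same orientation in $C_i$ and $C_i'$. The one delicate point is that a neighbour whose edge to $p$ is outgoing (i.e.\ incoming to $p$) loses an outgoing edge in $C_i'$, and I must check this creates no new rule violation there. This is where the corner geometry is used: the common neighbour of $p$ and $q$ other than $r$ (and likewise for $r$) is empty because $p$ is a $60^\circ$ particle, so around $q$ (resp.\ $r$) the port towards $p$ is flanked on one side by an empty node, whence by R\ref{rule:consecutive} the edge to $p$ sits at an end of the consecutive block of outgoing edges; deleting it preserves R\ref{rule:consecutive}, while R\ref{rule:incoming-edge} and R\ref{rule:no-cyclic-triangles} can only become easier. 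Hence every $p'\neq p$ is activable in $C_i'$ exactly when it is activable in $C_i$, so $\cS'_{i_0}$ is a fair execution of Algorithm~\ref{algo:ss-le} on $\cP\setminus\{p\}$. By minimality it stabilises to a final valid configuration with a unique sink $p''$, and since the edges at $p$ are stable, $C_{i_0}$ is itself final with all edges oriented; by the choice of $i_0$ this forces all edges of $\cP$ to be stable.

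It remains to identify the unique sink, for which I split on the orientation of the triangle $pqr$, which by R\ref{rule:no-cyclic-triangles} is acyclic and so makes $p$ a source, a sink, or the middle vertex. If $p$ is a source (both $\overrightarrow{pq},\overrightarrow{pr}$), deleting it changes no other particle's outgoing edges, so the sinks of $\cP$ and of $\cP\setminus\{p\}$ coincide and $p''$ is the unique sink of $\cP$. If $p$ is the middle vertex, the edge $qr$ is forced (e.g.\ $\overrightarrow{rq}$ when $\overrightarrow{rp},\overrightarrow{pq}$), so the upstream neighbour keeps an outgoing edge towards the other neighbour after $p$ is removed and hence is not the sink of $\cP\setminus\{p\}$; again $p''$ is the unique sink of $\cP$. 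When $p$ is a sink, I would use that $qr$ is oriented, say $\overrightarrow{qr}$: then around $r$ the outgoing edge $\overrightarrow{rp}$ is flanked by the incoming edge $\overrightarrow{qr}$ on one side and an empty node of the corner on the other, so R\ref{rule:consecutive} forces $\overrightarrow{rp}$ to be $r$'s only outgoing edge. Hence $r$ becomes the unique sink $p''$ of $\cP\setminus\{p\}$ while $q$ retains $\overrightarrow{qr}$ and is not a sink; adding $p$ back removes $r$ from the sinks and introduces the sink $p$, which is therefore the unique sink of $\cP$.

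The main obstacle is exactly this last, ``$p$ is a sink'' case: unlike in Lemma~\ref{lemma:pending}, the particle $p$ has two neighbours, so deleting it could a priori leave a second sink of $\cP\setminus\{p\}$ far from $p$ and destroy uniqueness. The crux is therefore to prove $p''\in\{q,r\}$, and the only leverage for this is the orientation of $qr$ together with the fact that a $60^\circ$ corner supplies an empty node flanking the edge to $p$; combined with R\ref{rule:consecutive} this pins the downstream neighbour's sole outgoing edge to be $\overrightarrow{\cdot\, p}$, forcing it to become the reduced sink. I expect checking R\ref{rule:consecutive} preservation and the sink bookkeeping across all three triangle orientations to be the most error-prone part of the write-up.
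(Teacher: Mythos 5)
Your proposal is correct and follows essentially the same route as the paper's proof: stability of $pq$ and $pr$ via Lemma~\ref{lemma:unloved-lemma}, deletion of $p$ and the minimality argument, the check that R\ref{rule:consecutive} survives at the neighbour losing its outgoing edge thanks to the empty flanking node (Observation~\ref{obs:edge-change-dir-consecutive}), and the same three-way case split on the acyclic orientation of the triangle $pqr$ with identical sink bookkeeping (in particular, in the sink case, pinning $r$ as the unique sink of the reduced system via R\ref{rule:consecutive}). The only cosmetic difference is that you carry out the reduction uniformly before splitting into cases, and you additionally note that $\cP\setminus\{p\}$ remains simply connected, which the paper leaves implicit.
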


  \begin{proof}
    By Lemma~\ref{lemma:unloved-lemma}, $pq$ and $pr$ are
    stable. Consequently, $pq$ and $pr$ are always directed in the same way all
    along $\cS_{i_0}$ and we can talk about the orientation of $pq$
    and $pr$ in $\cS_{i_0}$. For each $i \geq i_0$, let $C'_i = C_i \setminus \{p\}$ and consider the sequence of configurations
    $\mathcal{S}_{i_0}' = C'_{i_0}, C'_{i_0+1}, \ldots, C'_{i},
    \ldots$. For each $i \geq i_0$, either $C_{i+1} = C_i$ or there
    exists $p_i$ such that $C_{i+1}$ is obtained from $C_i$ by
    activating $p_i$ and thus modifying the orientations of edges
    incident to $p_i$. Since all edges incident to $p$ are stable in
    $\mathcal{S}_{i_0}$, we can assume $p_i \neq p$, for any $i \geq i_0$.

    We distinguish three cases, depending on the orientation of $pq$
    and $pr$ in $\cS_{i_0}$.

    \begin{case}\label{case:60-stabilize-out-out}
      The edges incident to $p$ are $\overrightarrow{pq}$ and
      $\overrightarrow{pr}$.
    \end{case}

    \begin{proof}
      For each $i \geq i_0$, the edges of $C_i'$ have the same
      orientation as in $C_{i}$. Hence, a particle $p' \neq p$
      is activable in $C_i'$ if and only if it is activable in
      $C_i$. The configuration obtained by activating
      $p_i$ in $C_i'$ is precisely $C_{i+1}'$ since the edges of
      $C_i'$ have the same orientation as in $C_{i}$. So,
      $\mathcal{S}'_{i_0}$ is a fair execution of Algorithm
      \ref{algo:ss-le} on $\cP\setminus \{p\}$. By the minimality of
      the size of $\cP$, there exists a step $i_1 \geq i_0$ such that
      $C_{i_1}'$ is a final configuration with a unique sink $p''$ where all edges are oriented. Since the edges
      incident to $p$ are stable, $C_{i_1}$ is a final configuration
      where all edges are oriented. By our definition of $i_0$, this
      implies that $i_1 = i_0$. Since $p$ has only outgoing edges in
      $C_{i_1} = C_{i_0}$, $p$ is not a sink of $C_{i_0}$ and $p''$ is
      the unique sink in $C_{i_0}$.
    \end{proof}

    \begin{case} \label{case:60-stabilize-in-out} The edges incident
      to $p$ are $\overrightarrow{qp}$ and $\overrightarrow{pr}$.
    \end{case}

    \begin{proof}
      Since $\overrightarrow{qp}$ is stable, $qr$ is also stable by
      Lemma~\ref{lemma:stable-particles}. By
      \ref{rule:no-cyclic-triangles}, $qr$ is directed from $q$ to
      $r$ in $\cS_{i_0}$. Notice that since $q$ and $p$ are incident
      to outgoing stable edges, from Lemma
      \ref{lemma:stable-particles}, $p$ and $q$ are incident only to
      stable edges and are never activable. The edges of
      $C'_{i \geq i_0} \setminus \{p\}$ have the same orientation as
      in $C_{i \geq i_0}$. Consequently, for any $p' \notin \{p,q\}$,
      $p'$ is activable in $C'_i$ if and only if it is activable in
      $C_i$.  Let us consider $q$. In $C_i$, $\overrightarrow{qp}$ is
      stable and directed and in $C'_i$, $q$ has an incoming edge from
      the respective empty node. Furthermore, $q$ is
      incident to an incoming edge from the empty common neighbour of
      $p$ and $q$. Hence, \ref{rule:consecutive} is satisfied for $q$
      in $C'_i$ from Observation
      \ref{obs:edge-change-dir-consecutive} and the remaining rules are
      satisfied for $q$ in $C'_i$ from Observation
      \ref{obs:r1-r2-r4-hold}. Therefore, $q$ is never activable in
      $C'_{i\geq i_0}$ and thus a particle $p'\neq p$ is activable in
      $C'_{i\geq i_0}$ if and only if it is activable in
      $C_{i\geq i_0}$. Consequently, $\mathcal{S}'_{i_0}$ is a fair
      execution of Algorithm \ref{algo:ss-le} on $\cP\setminus
      \{p\}$. By the minimality of the size of $\cP$, there exists a
      step $i_1 \geq i_0$ such that $C_{i_1}'$ is a stable
      configuration where all edges are oriented and that contains a
      unique sink $p''$. Note that $p'' \neq q$, since $\overrightarrow{qr} \in C_{i_1}'$. Since the edges incident to $p$ are stable
      and since $q$ is not activable, $C_{i_1}$ is a stable
      configuration where all edges are oriented. By our definition of
      $i_0$, this implies that $i_1 = i_0$. Since $p$ has an outgoing
      edge in $C_{i_1} = C_{i_0}$, $p$ is not a sink of $C_{i_0}$ and
      $p''$ is therefore the unique sink in $C_{i_0}$.
    \end{proof}
        
    \begin{case} \label{case:60-stabilize-in-in} The edges incident to
      $p$ are $\overrightarrow{qp}$ and $\overrightarrow{rp}$.
    \end{case}

    \begin{proof}
      Since $\overrightarrow{qp}$ and $\overrightarrow{rp}$ are
      stable, from Lemma~ \ref{lemma:stable-particles}, $q$ and $r$
      are never activable and all edges incident to $q$ and $r$ are
      stable. Hence $qr$ is stable and we assume without loss
      of generality that $qr$ is directed as $\overrightarrow{qr}$ in
      $C_{i\geq i_0}$.

      Notice that $p$ is a sink in $C$ and that from
      \ref{rule:consecutive} all edges incident to $r$ except
      $\overrightarrow{rp}$ are incoming to $r$ in $C_i$. Edges in
      $C'_{i} \setminus \{p\}$ have the same orientation as in
      $C_{i}$. Using the arguments in Case
      \ref{case:60-stabilize-in-out},
      \ref{rule:all-edges-directed}--\ref{rule:no-cyclic-triangles}
      are satisfied for $q$ and for $r$ in $C'_i$. So, $q$ and
      $r$ are never activable in $C'_i$. For the reasons in
      Case \ref{case:60-stabilize-in-out}, $\mathcal{S}'_{i_0}$ is a
      fair execution of Algorithm \ref{algo:ss-le} on
      $\cP\setminus \{p\}$. Since $\cP$ is of minimum size, there exists
      a step $i_1 \geq i_0$ such that $C_{i_1}'$ is a stable
      configuration where all edges are oriented and that contains a
      unique sink $p''$. Since the only outgoing edge of $r$ in
      $C_{i_0}$ is $\overrightarrow{rp}$, $r=p''$ is the unique sink
      of $C_{i_1}'$. Furthermore, since the edges incident to $p,q,r$
      are stable, $C_{i_1}$ is a final configuration where all edges
      are oriented. By our definition of $i_0$, this implies that
      $i_1 = i_0$. Since $p$ has only incoming edges in
      $C_{i_1} = C_{i_0}$, $p$ is a sink in $C_{i_0}$ and $r$ is not a
      sink in $C_{i_0}$ due to $\overrightarrow{rp}$, so $p$ is the
      unique sink in $C_{i_0}$.
    \end{proof}
    Therefore, for any orientation of the edges incident to $p$ in
    $\cS_{i_0}$, $C_{i_0}$ is a directed final configuration
    containing a unique sink. 
    \end{proof}

    Now assume there exist two
    $120^\circ$ particles connected by a path of $180^\circ$
    particles on the boundary of $\mathcal{P}$.

    \setcounter{case}{0} 
    \begin{lemma} \label{lemma:120-particles-stabilise}
      If $\cP$ contains two $120^\circ$ particles $p,p^*$ connected by
      a path of $180^\circ$ particles on the boundary, all edges are
      stable and there is a unique sink in the final configuration.
    \end{lemma}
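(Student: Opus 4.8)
The plan is to follow the exact same proof architecture established in Lemmas~\ref{lemma:pending} and~\ref{lemma:60-particles-stabilise}: use the minimality of $\cP$ by removing a carefully chosen boundary particle, argue that the execution restricted to $\cP \setminus \{p\}$ is still a fair execution, and then transfer the conclusion back. First I would invoke Lemma~\ref{lemma:unloved-lemma} to establish that the two boundary edges incident to each of the two $120\degree$ particles $p$ and $p^*$, as well as all edges along the connecting path of $180\degree$ particles, are stable. The key structural fact is that a $180\degree$ particle on the boundary has exactly two particle-neighbours on the boundary (the previous and next vertex along the path) together with occupied interior neighbours, so its boundary edges being consecutive and stable forces a specific local orientation by R\ref{rule:consecutive} and R\ref{rule:no-cyclic-triangles}. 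I would then remove $p$ (the $120\degree$ particle) and form $C'_i = C_i \setminus \{p\}$, verifying exactly as before that since all edges incident to $p$ are stable we may take $p_i \neq p$ for all $i \geq i_0$, that edges of $C'_i$ inherit their orientation from $C_i$, and hence that $\cS'_{i_0}$ is a fair execution of Algorithm~\ref{algo:ss-le} on $\cP \setminus \{p\}$.

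The subtlety that distinguishes this case from the pending and $60\degree$ cases is that removing a $120\degree$ particle $p$ does not automatically leave $\cP \setminus \{p\}$ simply connected, nor does it guarantee that the neighbours of $p$ remain non-activable in the reduced system. So the heart of the argument will be a case analysis on the orientation of the two stable boundary edges $pq$ and $pr$ incident to $p$ (where $q,r$ are its boundary neighbours, one of them on the $180\degree$ path), mirroring Cases~\ref{case:60-stabilize-out-out}--\ref{case:60-stabilize-in-in}. For each orientation I would check that after deleting $p$, the particles $q$ and $r$ (and whichever of them lies on the $180\degree$ path) still satisfy R\ref{rule:all-edges-directed}--R\ref{rule:no-cyclic-triangles} in $C'_i$, appealing to Observations~\ref{obs:edge-change-dir-consecutive} and~\ref{obs:r1-r2-r4-hold} to see that converting the deleted stable edge into an edge toward an empty node preserves consecutiveness and the other rules, so that these neighbours remain non-activable in the reduced execution. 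This is what guarantees that activability in $C'_i$ coincides with activability in $C_i$ for all particles other than $p$, which is exactly what makes $\cS'_{i_0}$ fair and lets the minimality hypothesis apply.

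The main obstacle, and the reason this case is stated separately, is ensuring that $\cP \setminus \{p\}$ is itself a simply connected particle system so that the minimality hypothesis is even applicable, and that a unique sink of the reduced system lifts to a unique sink of $\cP$. Removing a single $120\degree$ particle from the boundary keeps the system connected (a $120\degree$ particle is by definition not an articulation point) and cannot create a hole, so simple connectivity is preserved; I would state this explicitly. Once $\cS'_{i_0}$ is a fair execution on a smaller simply connected system, minimality yields a step $i_1 \geq i_0$ at which $C'_{i_1}$ is a final configuration with all edges oriented and a unique sink $p''$; since all edges incident to $p$ are stable, $C_{i_1}$ is also final with all edges oriented, forcing $i_1 = i_0$ by the definition of $i_0$. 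Finally I would verify in each orientation subcase that $p$ does not spuriously become a second sink: when $p$ has an outgoing edge it is not a sink and $p''$ remains the unique sink, and when $p$ is a sink (both edges incoming) I would check that $p''$ equals $p$ or else argue that the downstream neighbour of $p$ cannot be a sink in $C_{i_0}$, so that exactly one sink survives. Assembling the three subcases establishes that $C_{i_0}$ is a directed final configuration with a unique sink, contradicting the choice of $\cP$ and completing the $120\degree$ case.
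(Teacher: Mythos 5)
There is a genuine gap, and it sits exactly where this lemma differs from the $60\degree$ case. You assert that Lemma~\ref{lemma:unloved-lemma} forces the two boundary edges $pq$ and $pr$ of the $120\degree$ particle $p$ to be stable. It does not. A $120\degree$ particle has three occupied neighbours $q,s,r$ reached through consecutive ports, where $s$ is the common \emph{interior} neighbour of $p$, $q$ and $r$. Lemma~\ref{lemma:unloved-lemma} only forces the middle edge $ps$ to be stable (an unstable $ps$ would leave $p$ with at most three unstable edges all appearing consecutively), and it shows $pq$ is stable iff $pr$ is: the configuration in which $pq$ and $pr$ are both unstable is perfectly consistent with that lemma, since these two edges are \emph{not} consecutive around $p$. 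Ruling out this configuration is the crux of the paper's proof and is the only place where the second $120\degree$ particle $p^*$ and the connecting path of $180\degree$ particles are actually used: one propagates stability along the interior vertices $s=s_1, s_2, \ldots$ of the path, showing each boundary edge $r_{i-1}r_i$ is unstable, until reaching $p^*=r_k$, which would then have a single unstable incident edge, contradicting Lemma~\ref{lemma:unloved-lemma}. Your proposal names $p^*$ but never uses it, which is a strong sign the central difficulty has been missed; your claim that all edges along the $180\degree$ path are stable is likewise not a consequence of Lemma~\ref{lemma:unloved-lemma} (a $180\degree$ particle has four occupied neighbours and its two boundary edges are non-consecutive, so both may be unstable).

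A second, smaller gap: even in the case where $pq$, $ps$, $pr$ are all stable, organizing the analysis around the orientations of $pq$ and $pr$ alone (mirroring the $60\degree$ lemma) is not enough. The paper's case split is driven by the orientation of $ps$ and then of $sq$ and $sr$, and the subcase where $s$ has three outgoing edges $\overrightarrow{sq},\overrightarrow{sp},\overrightarrow{sr}$ (Case~\ref{case:120-stabilize-s-out}) cannot be handled by deleting $p$ and citing minimality; the paper instead constructs a modified execution $C_i^*$ on the \emph{same} support in which $\overrightarrow{sq}$ is replaced by $\overrightarrow{qs}$, verifies it is still a fair execution satisfying R\ref{rule:incoming-edge}--R\ref{rule:no-cyclic-triangles}, and reduces to an earlier subcase. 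Your deletion-plus-minimality template, which does carry the pending and $60\degree$ cases and several of the subcases here, does not cover this situation. Your observation that one must check $\cP\setminus\{p\}$ remains simply connected is fair but peripheral; the missing ideas are the instability of $pq,pr$ being a live possibility, the propagation to $p^*$, and the edge-reorientation trick.
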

    
    \begin{proof}
      Let $q$ and $r$ be the neighbours of $p$ on the boundary and let $s$ be
      the common neighbour of $p, q$ and $r$. By
      Lemma~\ref{lemma:unloved-lemma}, we know that $ps$ is stable in
      $\cS_{i_0}$. 
      We split the proof of the lemma into different cases, depending on
      the orientation of $ps$ in $C_{i\geq i_0}$. 

      \begin{case} \label{case:ps-stable}
        The edge between $p$ and $s$ is $\overrightarrow{ps}$.
      \end{case}

      For each $i \geq i_0$, let $C'_i = C_i \setminus \{p\}$ and
      consider the sequence of configurations
      $\mathcal{S}_{i_0}' = C'_{i_0}, C'_{i_0+1}, \ldots, C'_{i},
      \ldots$ For each $i \geq i_0$, either $C_{i+1} = C_i$ or there
      exists $p_i$ such that $C_{i+1}$ is obtained from $C_i$ by
      activating $p_i$ and thus modifying the orientations of edges
      incident to $p_i$.  From Lemma~\ref{lemma:stable-particles},
      since $\overrightarrow{ps}$ is stable, $p$ is never activable
      and the edges $pq$ and $pr$ are stable.  Consequently,
      $p_i \neq p$ for any $i \geq i_0$. The orientations of $pq$ and $pr$ lead to the following cases.
      
      \begin{subcase} \label{case:120-stabilize-ps-out-out} Particle
        $p$ is incident to $\overrightarrow{pq}$,
        $\overrightarrow{ps}$ and $\overrightarrow{pr}$.
      \end{subcase}

      \begin{proof}
        For each $i \geq i_0$, the edges of $C_i'$ have the same
        orientation as in $C_{i}$. So, a particle
        $p' \neq p$ is activable in $C_i'$ if and only if it is
        activable in $C_i$. The configuration obtained by
        activating $p_i$ in $C_i'$ is precisely $C_{i+1}'$ since the
        edges of $C_i'$ have the same orientation as in
        $C_{i}$. Hence, $\mathcal{S}'_{i_0}$ is a fair
        execution of Algorithm \ref{algo:ss-le} on
        $\cP\setminus \{p\}$. By the minimality of the size of $\cP$,
        there exists $i_1 \geq i_0$ such that $C_{i_1}'$ is a
        final configuration with a unique sink $p''$ where all edges are oriented. Since the edges incident to $p$
        are stable, $C_{i_1}$ is a final configuration where all
        edges are oriented. By our definition of $i_0$, this implies
        that $i_1 = i_0$. Since $p$ has only outgoing edges in
        $C_{i_1} = C_{i_0}$, $p$ is not a sink of $C_{i_0}$ and $p''$
        is the unique sink in $C_{i_0}$.
      \end{proof}
      
      \begin{subcase}\label{case:120-stabilize-ps-out-in}
        Particle $p$ is incident to $\overrightarrow{pq}$,
        $\overrightarrow{ps}$ and $\overrightarrow{rp}$. The case where $p$ is incident to $\overrightarrow{qp}$, $\overrightarrow{ps}$ and $\overrightarrow{pr}$ is completely symmetrical.
      \end{subcase}
      \begin{proof}
        Since $\overrightarrow{rp}$ is stable, from Lemma
        \ref{lemma:stable-particles}, $r$ is not activable in $C$ and
        $rs$ is stable. From \ref{rule:no-cyclic-triangles},
        $\overrightarrow{rs}$ is in $C_{i\geq i_0}$. The edges of
        $C'_{i \geq i_0} \setminus \{p\}$ have the same orientation as
        in $C_{i \geq i_0}$. So for any
        $p' \notin \{p,r\}$, $p'$ is activable in $C_i'$ if and only
        if $p'$ is activable in $C_i$.  Let us consider $r$. From
        Observation \ref{obs:edge-change-dir-consecutive} and the
        incoming edge from the empty common neighbour of $p$ and $r$
        to $r$, \ref{rule:consecutive} is satisfied for $r$ in
        $C'_i$.
        \ref{rule:all-edges-directed}, \ref{rule:incoming-edge} and \ref{rule:no-cyclic-triangles} are satisfied for $r$ in $C'_i$ from
        Observation \ref{obs:r1-r2-r4-hold}. Hence, $r$ is never
        activable in $C'_i$. Furthermore, the configuration obtained
        by activating $p_i \neq r$ in $C_i'$ is precisely $C_{i+1}'$
        since the edges of $C_i'$ have the same orientation as in
        $C_{i}$. So, $\mathcal{S}'_{i_0}$ is a fair execution
        of Algorithm \ref{algo:ss-le} on $\cP\setminus \{p\}$. By the
        minimality of the size of $\cP$, there exists a step
        $i_1 \geq i_0$ such that $C_{i_1}'$ is a final configuration
        where all edges are oriented and that contains a unique sink
        $p'' \neq r$, since $r$ has an outgoing edge
        $\overrightarrow{rs}$ in $C'_i$. Since the edges incident to
        $p$ are stable, $C_{i_1}$ is a stable configuration where all
        edges are oriented. By our definition of $i_0$, this implies
        that $i_1 = i_0$. Since $p$ has outgoing edges in
        $C_{i_1} = C_{i_0}$, $p$ is not a sink of $C_{i_0}$ and $p''$
        is the unique sink in $C_{i_0}$.
      \end{proof}

      \begin{subcase}\label{case:120-stabilize-ps-in-in}
        Particle $p$ is incident to $\overrightarrow{qp}$,
        $\overrightarrow{ps}$ and $\overrightarrow{rp}$.
      \end{subcase}
      \begin{proof}
        Since $\overrightarrow{qp}$ and $\overrightarrow{rp}$ are
        stable, from Lemma \ref{lemma:stable-particles}, $r$ and $q$
        are not activable in $C$ and the edges $rs$ and $qs$ are
        stable. By \ref{rule:no-cyclic-triangles},
        $\overrightarrow{rs}$ and $\overrightarrow{qs}$ belong to
        $C_{i\geq i_0}$. The edges of $C'_{i \geq i_0} \setminus \{p\}$ have the same
        orientation as in $C_{i \geq i_0}$. Consequently, for any
        $p' \notin \{p,q,r\}$, $p'$ is activable in $C_i'$ if and only
        if $p'$ is activable in $C_i$.  Let us consider $q$ and
        $r$. Using the same arguments as in Case
        \ref{case:120-stabilize-ps-out-in},
        \ref{rule:all-edges-directed}--\ref{rule:no-cyclic-triangles}
        are satisfied for $q$ and for $r$ in $C'_i$. Consequently,
         $q$ and $r$ are never activable in $C'_i$.  Furthermore,
        the configuration obtained by activating
        $p_i \notin \{p,q,r\}$ in $C_i'$ is precisely $C_{i+1}'$ since
        the edges of $C_i'$ have the same orientation as in
        $C_{i}$. Therefore, $C'_i$ satisfies all rules and
        $\mathcal{S}'_{i_0}$ is a fair execution of Algorithm
        \ref{algo:ss-le} on $\cP\setminus \{p\}$. By the minimality of
        the size of $\cP$, there exists a step $i_1 \geq i_0$ such
        that $C_{i_1}'$ is a final configuration where all edges are
        oriented and that contains a unique sink $p''$. Since $q$ and
        $r$ have both an outgoing edge to $s$ in $C'_i$,
        $p'' \notin \{q,r\}$.  Since the edges incident to $p$ are
        stable, $C_{i_1}$ is a final configuration where all edges are
        oriented. By our definition of $i_0$, this implies that
        $i_1 = i_0$. Since $p$ has outgoing edges in
        $C_{i_1} = C_{i_0}$, $p$ is not a sink of $C_{i_0}$ and $p''$
        is therefore the unique sink in $C_{i_0}$.
      \end{proof}

      We now consider the case where the edge between $p$ and $s$ is
      $\overrightarrow{sp}$.  Observe that by
      Lemma~\ref{lemma:stable-particles}, $sq$ and $sr$ are stable
      edges.  Note that by Lemma~\ref{lemma:unloved-lemma}, $pq$ is
      stable if and only if $pr$ is stable. We distinguish two
      cases depending on whether these two edges are stable.
      
      \begin{case}\label{case:incident-p-stable}
        The edge between $p$ and $s$ is $\overrightarrow{sp}$, and the
        edges $pq$ and $pr$ are stable.
      \end{case}

      The possible orientations of the stable edges $sq$ and $sr$ in $\mathcal{S}_{i_0}$ give the following subcases.

      \begin{subcase}\label{case:120-stabilize-sp-in-in-rs}
        Particle $s$ is incident to $\overrightarrow{qs}$ and
        $\overrightarrow{rs}$.
      \end{subcase}
      
      \begin{proof}
        By Lemma~\ref{lemma:stable-particles}, $r$ and $q$ are never
        activable and $rp$ and $qp$ are stable. By
        \ref{rule:no-cyclic-triangles}, $\overrightarrow{qp}$ and
        $\overrightarrow{rp}$ are in $C_{i\geq i_0}$.      

        The edges of $C'_{i \geq i_0} \setminus \{p\}$ have the same
        orientation as in $C_{i \geq i_0}$ and thus for any
        $p' \notin \{q,r,s\}$, $p'$ is activable in $C'_i$ if and only
        if it is activable in $C_i$.  Let us consider $q, r, s$. Using
        the same arguments as in Case
        \ref{case:120-stabilize-ps-out-in},
        \ref{rule:all-edges-directed}--\ref{rule:no-cyclic-triangles}
        are satisfied for $q$ and $r$ in $C \setminus \{p\}$.  Since
        $sp$ is between two incoming edges from Observation
        \ref{obs:edge-change-dir-consecutive}, \ref{rule:consecutive}
        is satisfied for $s$ in $C'_i$. \ref{rule:all-edges-directed}, \ref{rule:incoming-edge} and \ref{rule:no-cyclic-triangles} are
        satisfied for $s$ in $C'_i$ from Observation
        \ref{obs:r1-r2-r4-hold}. So, $q,r,s$ are never
        activable in $C'_i$. Furthermore, the configuration obtained
        by activating $p_i \neq q,s,r$ in $C_i'$ is precisely
        $C_{i+1}'$ since the edges of $C_i'$ have the same orientation
        as in $C_{i}$. Hence, $\mathcal{S}'_{i_0}$ is a fair execution
        of Algorithm \ref{algo:ss-le} on $\cP\setminus \{p\}$. By the
        minimality of the size of $\cP$, there exists a step
        $i_1 \geq i_0$ such that $C_{i_1}'$ is a final configuration with a unique sink $p''$ where all edges are oriented. Notice that by \ref{rule:consecutive}, the only
        outgoing edge of $s$ in $C_i$ is $\overrightarrow{sp}$, so
        $p''=s$ is the unique sink of $C'_i$. Since the edges incident
        to $p$ are stable, $C_{i_1}$ is a final configuration where
        all edges are oriented. By our definition of $i_0$, this
        implies that $i_1 = i_0$. Since $s$ has an outgoing edge
        $\overrightarrow{sp}$ in $C_{i_1} = C_{i_0}$, $s$ is not a
        sink of $C_{i_0}$ and $p$ is the unique sink in $C_{i_0}$.
      \end{proof}

      Observe that if $s$ is incident to $\overrightarrow{qs}$ (resp.,
      $\overrightarrow{rs}$), then since $qp$ (resp., $rp$) is stable,
      by \ref{rule:no-cyclic-triangles}, $p$ is incident to
      $\overrightarrow{qp}$ (resp., $\overrightarrow{rp}$). When among
      $qs$ and $rs$, there is one outgoing and one incoming edge, we
      consider two cases depending on whether $p$ is a sink in
      $\cS_{i_0}$.

      \begin{subcase} \label{case:120-stabilize-sp-in-in-sr}
        Particle $s$ is incident to $\overrightarrow{qs}$ and
        $\overrightarrow{sr}$ and $p$ is incident to
        $\overrightarrow{rp}$.          
      \end{subcase}

      \begin{proof}
        Since $\overrightarrow{qs}$ and $\overrightarrow{sp}$ are
        stable, necessarily $\overrightarrow{qp}$ is stable and $p$ is
        a sink in $C_{i \geq i_0}$. So, $p$ is never
        activable in $\cS_{i_0}$ and from
        Lemma~\ref{lemma:stable-particles}, $q,r,s$ are never
        activable in $\cS_{i_0}$ either.
        
        The edges of $C'_{i \geq i_0} \setminus \{p\}$ have the same
        orientation as in $C_{i \geq i_0}$ and thus any particle
        $p' \notin \{p, q,r,s\}$ is activable in $C'_i$ if and only if
        it is activable in $C_i$. Let us consider $q, r,s$. Since
        $\overrightarrow{sp}$ is between an incoming and an outgoing
        edge, by Observation \ref{obs:edge-change-dir-consecutive},
        \ref{rule:consecutive} is satisfied for $s$ in
        $C'_i$ and the remaining rules are
        satisfied for $s$ in $C'_i$ from Observation
        \ref{obs:r1-r2-r4-hold}. The arguments for $q$ are the same as in
        Case \ref{case:120-stabilize-sp-in-in-rs}. Due to the incoming
        edge from the common empty neighbour of $r$ and $p$ and from
        Observation \ref{obs:edge-change-dir-consecutive},
        \ref{rule:consecutive} is satisfied for $r$ in
        $C'_i$ and
        the remaining rules are satisfied for $r$ in $C'_i$ from
        Observation \ref{obs:r1-r2-r4-hold}. So, $q,r,s$ are
        never activable in $C'_{i\geq i_0}$. Furthermore, the
        configuration obtained by activating $p_i \notin \{q,r,s\}$ in
        $C_i'$ is precisely $C_{i+1}'$ since the edges of $C_i'$ have
        the same orientation as in $C_{i}$. Hence, all rules are
        satisfied in $C'_i$ and $\mathcal{S}'_{i_0}$ is a fair
        execution of Algorithm \ref{algo:ss-le} on
        $\cP\setminus \{p\}$. By the minimality of the size of $\cP$,
        there exists a step $i_1 \geq i_0$ such that $C_{i_1}'$ is a
        final configuration where all edges are oriented and that
        contains a unique sink $p''$. Notice that by
        \ref{rule:consecutive}, the only outgoing edge of $r$ in
        $C_i$ is $\overrightarrow{rp}$, so $p''=r$ is the unique sink
        of $C'_i$. As the edges incident to $p$ are stable,
        $C_{i_1}$ is a final configuration where all edges are
        oriented. By our definition of $i_0$,
        $i_1 = i_0$. Since $p$ does not have outgoing edges in
        $C_{i_1} = C_{i_0}$, $r$ is not a sink of $C_{i_0}$ due to
        $\overrightarrow{rp}$ and $p$ is the unique sink in $C_{i_0}$.
      \end{proof}

      \begin{subcase}\label{case:120-stabilize-sp-in-out}
        Particle $s$ is incident to $\overrightarrow{qs}$ and
        $\overrightarrow{sr}$ and $p$ is incident to
        $\overrightarrow{pr}$.          
      \end{subcase}
      
      \begin{proof}
        As noted before the stable edge $qp$ is oriented as
        $\overrightarrow{qp}$ by \ref{rule:no-cyclic-triangles}. By
        Lemma~\ref{lemma:stable-particles}, $p$, $q$ and $s$ are never
        activable in $\cS_{i_0}$ and the edges $pr$ and $sr$ are
        stable.  The edges of $C'_{i \geq i_0} \setminus \{p\}$ have
        the same orientation as in $C_{i \geq i_0}$. So,
        for any $p' \notin \{p,q,s\}$, $p'$ is activable in $C'_i$ if
        and only if it is activable in $C_i$. Let us consider $q$ and
        $s$. Using the same arguments for both $q$ and $s$ as in Case
        \ref{case:120-stabilize-sp-in-in-sr}, we obtain that
        \ref{rule:all-edges-directed}--\ref{rule:no-cyclic-triangles}
        are always satisfied at $q$ and $s$, and that they are never
        activable in $C_{i\geq i_0}'$. Hence, $\mathcal{S}'_{i_0}$ is
        a fair execution of Algorithm \ref{algo:ss-le} on
        $\cP\setminus \{p\}$. By the minimality of the size of $\cP$,
        there exists a step $i_1 \geq i_0$ such that $C_{i_1}'$ is a
        final configuration where all edges are oriented and that
        contains a unique sink $p'' \notin \{q,s\}$, since $q$ and $s$
        have outgoing edges in $C \setminus \{p\}$. Since the edges
        incident to $p$ are stable, $C_{i_1}$ is a stable
        configuration where all edges are oriented. By our definition
        of $i_0$, this implies that $i_1 = i_0$. Since $p$ is incident
        to $\overrightarrow{pr}$ in $C_{i_1} = C_{i_0}$, $p$ is not a
        sink in $C_{i}$, hence $p''$ is the unique sink in $C_{i_0}$.
      \end{proof}

      \begin{subcase}\label{case:120-stabilize-s-out}
        Particle $s$ is incident to $\overrightarrow{sq}$ and
        $\overrightarrow{sr}$.
      \end{subcase}

      \begin{proof}
        Since $pq, pr, ps$ are stable and since
        $ps$ is directed as $\overrightarrow{sp}$ in $C_{i\geq i_0}$, $p$ is incident to at
        most one outgoing edge. Without loss of generality, we can
        thus assume that $\overrightarrow{qp}$ is in $C_{i\geq i_0}$.
        Observe that by \ref{rule:consecutive},
        $\overrightarrow{qp}$ is the only outgoing edge at $q$ and no
        neighbour of $q$ is activable in $C_{i\geq i_0}$ by
        Lemma~\ref{lemma:stable-particles}. Note also that $s$ has
        three outgoing edges
        $\overrightarrow{sq}, \overrightarrow{sp},
        \overrightarrow{sr}$ in $C_i$ and since $s$ is not activable
        in $C_i$, by \ref{rule:incoming-edge}, all other edges
        incident to $s$ are incoming. Again, this implies that all
        neighbours of $s$ different from $r$ are not activable in
        $C_{i\geq i_0}$. For each $i \geq i_0$, let $C^*_i$ be the
        configuration obtained from $C_i$ by replacing
        $\overrightarrow{sq}$ by $\overrightarrow{qs}$.

        For any particle $p' \notin \{s,q\}$,
        \ref{rule:incoming-edge} and \ref{rule:consecutive} are
        satisfied at $p'$ in $C^*_{i}$ since they are satisfied at
        $p'$ in $C_i$ by Lemma~\ref{lemma:hated-lemma} and the
        orientation of the edges incident to $p'$ in $C_i^*$ is the
        same as in $C_i$. 
        Since $q$ only has one outgoing edge in
        $C_i$, \ref{rule:incoming-edge} and \ref{rule:consecutive} are satisfied at $q$ in $C^*_{i}$. Since $s$ has
        three outgoing edges
        $\overrightarrow{sq}, \overrightarrow{sp},
        \overrightarrow{sr}$ in $C_i$ and since all other edges
        incident to $s$ are incoming, \ref{rule:incoming-edge} directly holds at $s$ in $C_i^*$ and \ref{rule:consecutive} holds at $s$ in
        $C_i^*$ since $p,q$ and $r$ are reached through consecutive ports of $s$ by definition. If
        \ref{rule:no-cyclic-triangles} is not satisfied at some
        particle $p'$ in $C_i^*$, there is a directed triangle
        made of the edges
        $\overrightarrow{qs}, \overrightarrow{sp'},
        \overrightarrow{p'q}$ in $C_i^*$.  Since the only
        out-neighbours of $s$ in $C_i^*$ are $p$ and $r$, necessarily,
        $p' =p$, but this is impossible since $pq$ is oriented from
        $q$ to $p$. So, \ref{rule:incoming-edge},
        \ref{rule:consecutive}, \ref{rule:no-cyclic-triangles} are
        always satisfied in $C^*_{i \geq i_0}$. Since all edges
        incident to $q$ and $s$ are stable in $C_i$,
        \ref{rule:all-edges-directed} is also satisfied at $s$ and
        $q$ in $C_i$ and in $C_i^*$. Hence, $q$ and $s$ are
        never activable in $C^*_{i \geq i_0}$. For any $p' \notin \{q,s\}$, $p'$
        is activable in $C_i^*$ if and only if it is activable in
        $C_i$. Therefore, $\cS_i^*$ is a fair execution of
        Algorithm~\ref{algo:ss-le} on $\cP$. Note that when
        considering $C^*_{i\geq i_0}$, we are in Case~\ref
        {case:120-stabilize-sp-in-in-sr}
        or~\ref{case:120-stabilize-sp-in-out}. So, we know
        that $C^*_{i_0}$ is a final configuration where all edges are
        oriented and that contains a unique sink $p''$ different from
        $q$ and $s$. Since a particle $p'$ is activable in $C_{i_0}$
        if and only if it is activable in $C^*_{i_0}$, $C_{i_0}$ is
        also a valid configuration, and $p''$ is the unique sink of
        $C_{i_0}$.
      \end{proof}

      Finally, we consider the case where the edges $pq$ and $pr$ are
      not stable. We remind the reader that from Lemma \ref{lemma:unloved-lemma}, $pq$ and $pr$ are either both stable or both unstable. 
      
      \begin{case} \label{case:incident-p-not-stable} The edge between
        $p$ and $s$ is $\overrightarrow{sp}$ and the edges $pq$ and
        $pr$ are not stable.
      \end{case}

      \begin{proof}
        We will prove that this case is not possible.
        \begin{figure}[ht]
          \centering
          \begin{subfigure}{.49\textwidth}
            \centering
            \begin{tikzpicture}[scale=.4]
              \clip (0,1.73) rectangle (16,5*1.4) ;
              
              \filldraw[gray!10] (0:0) ++ (-120:2) --++ (90:3*1.73)
              --++ (0:18) --++ (-90:3*1.73) ;
              
              \triangularGrid{2}{8}{2}{1.73}
              
              \filldraw (60:4) circle (6pt) ++ (60:2) node[]
              {\scriptsize $\blacksquare$} ++ (-60:2) circle (6pt) ++
              (60:2) circle (6pt) ++ (-60:2) circle (6pt) ++ (60:2)
              circle (6pt) ++ (-60:2) circle (6pt) ++ (60:1) ++ (0:1)
              node[] {$\ldots$} ++ (60:1) ++ (0:1) circle (6pt) ++
              (-60:2) circle (6pt) ++ (60:2) circle(6pt) ++ (-60:2)
              circle (6pt) ;
              
              \draw[-] (60:4) --++ (60:1.7) ;
              \draw[-] (60:6) --++ (0:1.7) ;
              \draw[-Stealth] (0:2) ++ (60:4) --++ (180:1.7) ;
              \draw[-Stealth] (0:2) ++ (60:4) --++ (120:1.7) ;
              \draw[-Stealth] (0:2) ++ (60:4) --++ (60:1.7) ;
              
              \filldraw (0:.5) ++ (60:3.1) node[] {\small $q$} ++
              (0:2) node[] {\small $s$} ++ (0:2) node[] {\small $s_2$}
              ++ (0:2) node[] {\small $s_3$} ++ (0:4) node[] {\small
                $s_{k}$} ++ (0:2) node[] {\small $s_{k+1}$} ++
              (90:3.2) ++ (180:1) node[] {\small $r_k$} ++ (180:2)
              node[] {\small $r_{k-1}$} ++ (180:4) node[] {\small
                $r_2$} ++ (180:2) node[] {\small $r$} ++ (180:2)
              node[] {\small $p$} ;
            \end{tikzpicture}
            \caption{A
            $120^\circ$ particle $p$ (square) with
            $\{pq,pr\}$ unstable and $s$ incident to
            the directed edges $sq$, $sp$ and
            $sr$}
            \label{fig:120-sp-undirected-original}
          \end{subfigure}
          \begin{subfigure}{.49\textwidth}
            \centering
            \begin{tikzpicture}[scale=.4]
              \clip (0,1.73) rectangle (16,5*1.4) ;
              
              \filldraw[gray!10] (0:0) ++ (-120:2) --++ (90:3*1.73)
              --++ (0:18) --++ (-90:3*1.73) ;
              
              \triangularGrid{2}{8}{2}{1.73}
              
              \filldraw (60:4) circle (6pt) ++ (60:2) node[]
              {\scriptsize $\blacksquare$} ++ (-60:2) circle (6pt) ++
              (60:2) circle (6pt) ++ (-60:2) circle (6pt) ++ (60:2)
              circle (6pt) ++ (-60:2) circle (6pt) ++ (60:1) ++ (0:1)
              node[] {$\ldots$} ++ (60:1) ++ (0:1) circle (6pt) ++
              (-60:2) circle (6pt) ++ (60:2) circle(6pt) ++ (-60:2)
              circle (6pt) ;
              
              \draw[-] (60:4) --++ (60:1.7) ;
              \draw[-] (60:6) --++ (0:1.7) ;
              \draw[-Stealth] (0:2) ++ (60:4) --++ (180:1.7) ;
              \draw[-Stealth] (0:2) ++ (60:4) --++ (120:1.7) ;
              \draw[-Stealth] (0:2) ++ (60:4) --++ (60:1.7) ;
              
              \draw[-] (60:6) ++ (0:2) --++ (0:2) ++ (0:4) --++ (0:2) ;
              \draw[-Stealth] (0:16) ++ (120:4) --++ (120:1.7) ;
              \draw[-Stealth] (0:16) ++ (120:4) --++ (180:1.7) ;
              \draw[-Stealth] (0:16) ++ (120:4) ++ (180:2) --++ (60:1.7) ;
              \draw[-Stealth] (0:16) ++ (120:4) ++ (180:2) --++ (120:1.7) ;
              \draw[-Stealth] (0:16) ++ (120:4) ++ (180:2) --++ (180:1.7) ;
              \draw[-Stealth] (0:10) ++ (120:4) --++ (180:1.7)  ;
              \draw[-Stealth] (0:8) ++ (120:4) --++ (180:1.7)  ;
              \draw[-Stealth] (0:4) ++ (60:4) --++ (120:1.7) ;
              \draw[-Stealth] (0:4) ++ (60:4) --++ (60:1.7) ;
              \draw[-Stealth] (0:6) ++ (60:4) --++ (120:1.7) ;
              
              \filldraw (0:.5) ++ (60:3.1) node[] {\small $q$} ++
              (0:2) node[] {\small $s$} ++ (0:2) node[] {\small $s_2$}
              ++ (0:2) node[] {\small $s_3$} ++ (0:4) node[] {\small
                $s_{k}$} ++ (0:2) node[] {\small $s_{k+1}$} ++
              (90:3.2) ++ (180:1) node[] {\small $r_k$} ++ (180:2)
              node[] {\small $r_{k-1}$} ++ (180:4) node[] {\small
                $r_2$} ++ (180:2) node[] {\small $r$} ++ (180:2)
              node[] {\small $p$} ;
            \end{tikzpicture}
            \caption{The final orientation of edges
            in Case \ref{case:incident-p-not-stable}\newline\newline}
            \label{fig:120-sp-undirected-final}
          \end{subfigure}
          \caption{The setting in Case \ref{case:incident-p-not-stable}}
        \end{figure}

        Without loss of generality, assume that $r$ is on the path
        connecting $p$ to $p^*$ via $180^\circ$ particles. Note that it
        is possible that $p^* = r$. Let
        $(r_0= p, r_1=r, r_2, \ldots, r_k = p^*)$ be the path on the
        boundary from $p$ to $p^*$ whose inner particles are all
        $180^\circ$ particles (if $r =p^*$, then $k=1$). Let $s_{j+1}$
        be the common neighbour of any pair of consecutive
        particles $r_j$ and $r_{j+1}$ with $0 \leq j \leq k-1$, and
        observe that $s_1 = s$. Let $s_0= q$ and let $s_{k+1}$ be the
        neighbour of $r_k$ on the boundary that is distinct from
        $r_{k-1}$. This setting is also shown in Figure
        \ref{fig:120-sp-undirected-original}.

        Since $s$ is incident to a stable edge, $\overrightarrow{sp}$, from Lemma~\ref{lemma:stable-particles} all edges incident to
        $s$ are stable. By Lemma~\ref{lemma:stable-particles} applied
        at $q$ and $r$ and since $pq$ and $pr$ are not stable,
        necessarily $\overrightarrow{sq}$ and $\overrightarrow{sr}$
        are stable in $\cS_{i_0}$.  Since in this setting $s$ has
        three outgoing edges, $sq$, $sp$ and $sr$, from
        \ref{rule:incoming-edge} $ss_2$ is incoming to $s$. So,
        $s_2$ is incident to the stable outgoing edge
        $\overrightarrow{s_2s}$. From Lemma
        \ref{lemma:stable-particles}, $s_2$ is not activable and all
        edges incident to $s_2$ are stable and directed. From
        \ref{rule:no-cyclic-triangles}, $s_2r$ is oriented from $s_2$
        to $r$. If $r = p^*$ (i.e., if $k = 1$), $r$ is incident
        to only one edge that is not stable, which is impossible from
        Lemma~\ref{lemma:unloved-lemma}. Hence, $k \geq 2$ and
        $rr_2$ is not stable.
        As $s_2$ is not activable, from Lemma
        \ref{lemma:stable-particles}, $s_2r_2$ is stable. If $s_2r_2$
        is directed from $r_2$ to $s_2$, $r_2$ has a stable outgoing
        edge and from Lemma \ref{lemma:stable-particles}, $rr_2$ is
        stable, which is a contradiction for $r$. So $s_2r_2$ is
        directed from $s_2$ to $r_2$. Generalising, for $i \geq 1$ each particle $s_i$ is incident to the stable edges $\overrightarrow{s_is_{i-1}}$, $\overrightarrow{s_ir_{i-1}}$ and $\overrightarrow{s_ir_i}$ and the edge $r_{i-1}r_i$ is not stable.
        Then, for $i=k$ the edge $r_{k-1}r_{k}$ should be the only unstable edge incident to $r_k$ which is impossible from Lemma \ref{lemma:unloved-lemma}, a contradiction.  
      \end{proof}

      This ends the proof of
      Lemma~\ref{lemma:120-particles-stabilise}.
    \end{proof}

    The proof of Theorem~\ref{th:algo-correct} follows from
    Lemmas~\ref{lemma:boundaryparticles}, \ref{lemma:pending},
    \ref{lemma:60-particles-stabilise} and
    \ref{lemma:120-particles-stabilise}.
\end{proof}

\section{Further Remarks}

A very nice property of our algorithm is that it is very
simple. However, we showed that our algorithm works assuming that the
scheduler is Gouda fair. 
The execution presented in Figure~\ref{fig:problematic-execution}
shows that if we consider a sequential unfair scheduler (i.e., we only
ask that the scheduler activates an activable particle at each step),
there exist periodic executions that never reach a valid
configuration. It would thus be interesting to understand if we can
design a self-stabilising leader election algorithm for simply
connected configurations that is correct even with an unfair
scheduler.

\begin{figure}[htp]
\centering
\includegraphics[scale=0.35,page=2]{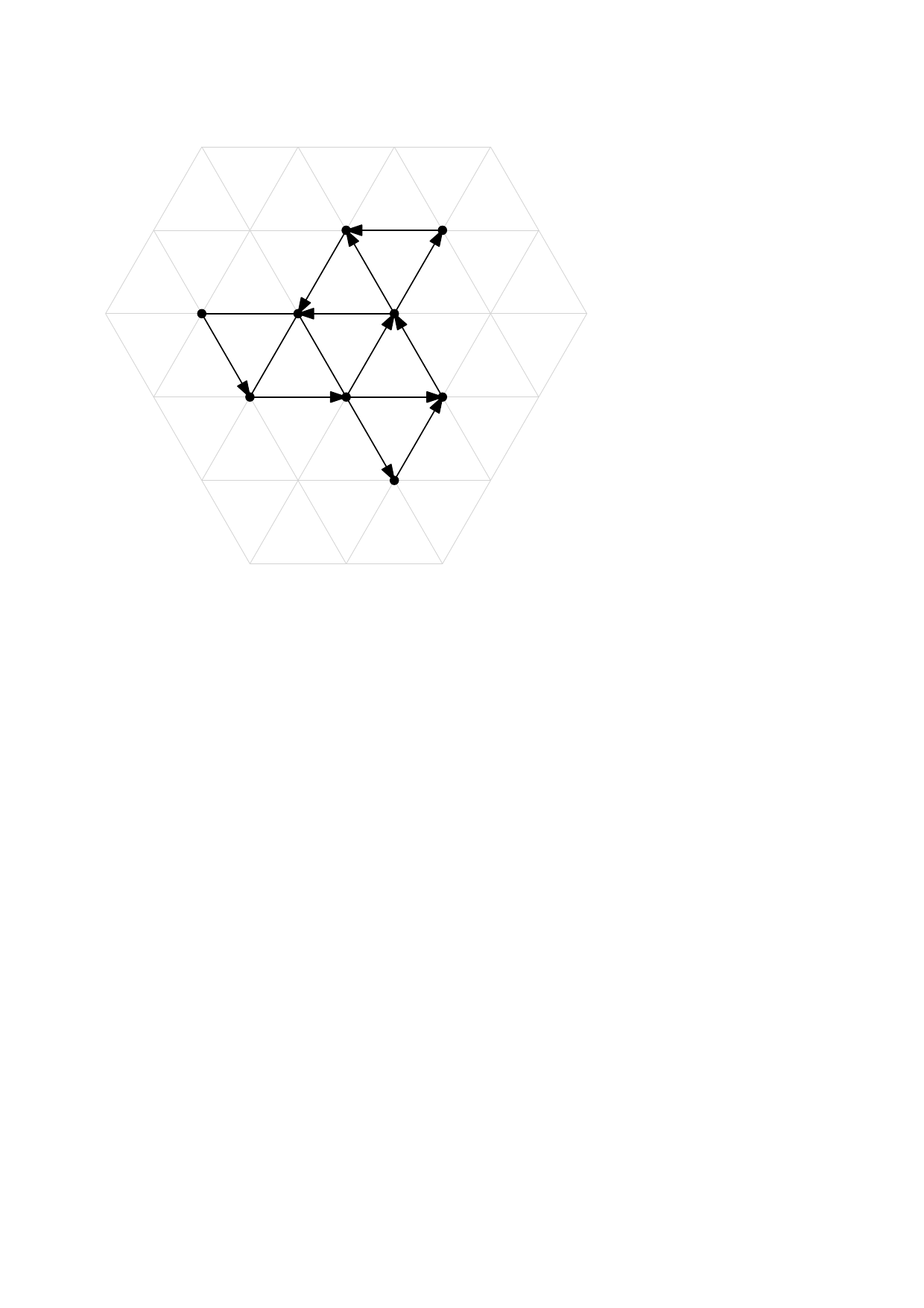}
\includegraphics[scale=0.35,page=5]{figures/Fig7.pdf} 
\includegraphics[scale=0.35,page=8]{figures/Fig7.pdf}

\

\includegraphics[scale=0.35,page=11]{figures/Fig7.pdf} 
\includegraphics[scale=0.35,page=14]{figures/Fig7.pdf}
\includegraphics[scale=0.35,page=17]{figures/Fig7.pdf}
\caption{A periodic unfair execution of our algorithm. At each step,
  the red vertex is activated, and it modifies the status of its
  incident red edges (i.e., the ones that are not incoming).}
\label{fig:problematic-execution}
\end{figure}

Our algorithm heavily uses the geometry of the system and relies on
the fact that the support is simply connected. If we suppose that the
particles agree on the orientation of the grid, the impossibility
results of Dolev et al.~\cite{DolevGS99} no longer hold. One can
thus wonder if, assuming that the particles agree on the orientation of
the grid, it is possible to design a silent self-stabilising leader
election algorithm using constant memory for arbitrary connected
configurations. Again, one should use the geometry of the grid in
order to overcome the impossibility results of Dolev et al., but it
seems to be very challenging.

Another natural direction is to consider different classes of graphs where a self--stabilising leader election algorithm with constant memory can be obtained. For example, we can use similar arguments to the ones in this work to show that Algorithm \ref{algo:ss-le} also elects a unique leader in the class of planar graphs in which all inner faces are triangles and the degree of each inner node is at least six. A closely related open question is to consider whether it is possible to obtain self--stabilising leader election algorithms with constant memory for some class of graphs, under weaker assumptions on the scheduler (i.e., without assuming Gouda fairness). 

\bibliographystyle{unsrt}
\bibliography{biblio}

\begin{thebibliography}{10}

\bibitem{chalopin2024stabilising}
J\'{e}r\'{e}mie Chalopin, Shantanu Das, and Maria Kokkou.
\newblock {Deterministic Self-Stabilising Leader Election for Programmable Matter with Constant Memory}.
\newblock In {\em DISC 2024}, volume 319 of {\em LIPIcs Leibniz Int. Proc. Inform.}, pages 13:1--13:17. Schloss Dagstuhl -- Leibniz-Zentrum f{\"u}r Informatik, 2024.

\bibitem{Gouda2001theory}
Mohamed~G. Gouda.
\newblock The theory of weak stabilization.
\newblock In {\em {WSS} 2001}, volume 2194 of {\em Lecture Notes in Comput. Sci.}, pages 114--123. Springer, 2001.

\bibitem{DolevGS99}
Shlomi Dolev, Mohamed~G. Gouda, and Marco Schneider.
\newblock Memory requirements for silent stabilization.
\newblock {\em Acta Inf.}, 36(6):447--462, 1999.

\bibitem{LeLann77}
G{\'{e}}rard Le~Lann.
\newblock Distributed systems - towards a formal approach.
\newblock In {\em {IFIP} 1977}, pages 155--160. North-Holland, 1977.

\bibitem{Dijkstra74}
Edsger~W. Dijkstra.
\newblock Self-stabilizing systems in spite of distributed control.
\newblock {\em Commun. {ACM}}, 17(11):643--644, 1974.

\bibitem{DolevBook}
Shlomi Dolev.
\newblock {\em Self-{S}tabilization}.
\newblock MIT Press, 2000.

\bibitem{KormanKP10}
Amos Korman, Shay Kutten, and David Peleg.
\newblock Proof labeling schemes.
\newblock {\em Distributed Comput.}, 22(4):215--233, 2010.

\bibitem{BlinFP14}
L{\'{e}}lia Blin, Pierre Fraigniaud, and Boaz Patt{-}Shamir.
\newblock On proof-labeling schemes versus silent self-stabilizing algorithms.
\newblock In {\em {SSS} 2014}, volume 8756 of {\em Lecture Notes in Comput. Sci.}, pages 18--32. Springer, 2014.

\bibitem{AM-derakhshandeh2014amoebot}
Zahra Derakhshandeh, Shlomi Dolev, Robert Gmyr, Andr{\'e}a~W Richa, Christian Scheideler, and Thim Strothmann.
\newblock {A}moebot -- {A} new model for programmable matter.
\newblock In {\em SPAA 2014}, pages 220--222. {ACM}, 2014.

\bibitem{diluna2020shape}
Giuseppe~Antonio Di~Luna, Paola Flocchini, Nicola Santoro, Giovanni Viglietta, and Yukiko Yamauchi.
\newblock Shape formation by programmable particles.
\newblock {\em Distributed Comput.}, 33(1):69--101, 2020.

\bibitem{dubois2011taxonomy}
Swan Dubois and S{\'{e}}bastien Tixeuil.
\newblock A taxonomy of daemons in self-stabilization.
\newblock {\em arXiv preprint}, 2011.

\bibitem{BlinFB23}
L{\'{e}}lia Blin, Laurent Feuilloley, and Gabriel Le~Bouder.
\newblock Optimal space lower bound for deterministic self-stabilizing leader election algorithms.
\newblock {\em Discret. Math. Theor. Comput. Sci.}, 25, 2023.

\bibitem{DattaLV11}
Ajoy~Kumar Datta, Lawrence~L. Larmore, and Priyanka Vemula.
\newblock Self-stabilizing leader election in optimal space under an arbitrary scheduler.
\newblock {\em Theor. Comput. Sci.}, 412(40):5541--5561, 2011.

\bibitem{BlinT18}
L{\'{e}}lia Blin and S{\'{e}}bastien Tixeuil.
\newblock Compact deterministic self-stabilizing leader election on a ring: the exponential advantage of being talkative.
\newblock {\em Distributed Comput.}, 31(2):139--166, 2018.

\bibitem{ItkisLS95}
Gene Itkis, Chengdian Lin, and Janos Simon.
\newblock Deterministic, constant space, self-stabilizing leader election on uniform rings.
\newblock In {\em WDAG 1995}, volume 972 of {\em Lecture Notes in Comput. Sci.}, pages 288--302. Springer, 1995.

\bibitem{VonNeumann}
John von Neumann.
\newblock {\em Theory of Self-Reproducing Automata}.
\newblock University of Illinois Press, 1966.

\bibitem{AngluinAER07}
Dana Angluin, James Aspnes, David Eisenstat, and Eric Ruppert.
\newblock The computational power of population protocols.
\newblock {\em Distributed Comput.}, 20(4):279--304, 2007.

\bibitem{CaiIW12}
Shukai Cai, Taisuke Izumi, and Koichi Wada.
\newblock How to prove impossibility under global fairness: On space complexity of self-stabilizing leader election on a population protocol model.
\newblock {\em Theory Comput. Syst.}, 50(3):433--445, 2012.

\bibitem{gastineau2018distributed}
Nicolas Gastineau, Wahabou Abdou, Nader Mbarek, and Olivier Togni.
\newblock Distributed leader election and computation of local identifiers for programmable matter.
\newblock In {\em {ALGOSENSORS 2018}}, volume 11410 of {\em Lecture Notes in Comput. Sci.}, pages 159--179. Springer, 2018.

\bibitem{emek2019deterministic}
Yuval Emek, Shay Kutten, Ron Lavi, and William~K Moses~Jr.
\newblock Deterministic leader election in programmable matter.
\newblock In {\em {ICALP} 2019}, volume 132 of {\em LIPIcs Leibniz Int. Proc. Inform.}, pages 140:1--140:14. Schloss Dagstuhl - Leibniz-Zentrum f{\"{u}}r Informatik, 2019.

\bibitem{dufoulon2021efficient}
Fabien Dufoulon, Shay Kutten, and William~K. Moses~Jr.
\newblock Efficient deterministic leader election for programmable matter.
\newblock In {\em {PODC 2021}}, pages 103--113. {ACM}, 2021.

\bibitem{bazzi2019stationary}
Rida~A. Bazzi and Joseph~L. Briones.
\newblock Stationary and deterministic leader election in self-organizing particle systems.
\newblock In {\em {SSS} 2019}, volume 11914 of {\em Lecture Notes in Comput. Sci.}, pages 22--37. Springer, 2019.

\bibitem{briones2023asynchronous}
Joseph~L. Briones, Tishya Chhabra, Joshua~J. Daymude, and Andr{\'{e}}a~W. Richa.
\newblock Invited paper: {A}synchronous deterministic leader election in three-dimensional programmable matter.
\newblock In {\em {ICDCN} 2023}, pages 38--47. {ACM}, 2023.

\bibitem{chalopin2024deterministic}
J{\'e}r{\'e}mie Chalopin, Shantanu Das, and Maria Kokkou.
\newblock Deterministic leader election for stationary programmable matter with common direction.
\newblock In {\em {SIROCCO} 2024}, volume 14662 of {\em Lecture Notes in Comput. Sci.}, pages 174--191. Springer, 2024.

\bibitem{toffoli1991programmable}
Tommaso Toffoli and Norman Margolus.
\newblock Programmable matter: {C}oncepts and realization.
\newblock {\em Int. J. High Speed Comput.}, 5(2):155--170, 1993.

\bibitem{hawkes2010programmable}
Elliot Hawkes, Byoungkwon An, Nadia~M. Benbernou, H.~Tanaka, Sangbae Kim, Erik~D. Demaine, Daniela Rus, and Robert~J. Wood.
\newblock Programmable matter by folding.
\newblock {\em Proc. Natl. Acad. Sci.}, 107(28):12441--12445, 2010.

\bibitem{woods2013active}
Damien Woods, Ho{-}Lin Chen, Scott Goodfriend, Nadine Dabby, Erik Winfree, and Peng Yin.
\newblock Active self-assembly of algorithmic shapes and patterns in polylogarithmic time.
\newblock In {\em {ITCS} 2013}, pages 353--354. {ACM}, 2013.

\bibitem{fekete2021cadbots}
S{\'{a}}ndor~P. Fekete, Robert Gmyr, Sabrina Hugo, Phillip Keldenich, Christian Scheffer, and Arne Schmidt.
\newblock Cadbots: {A}lgorithmic aspects of manipulating programmable matter with finite automata.
\newblock {\em Algorithmica}, 83(1):387--412, 2021.

\bibitem{daymude2023canonical}
Joshua~J. Daymude, Andr{\'{e}}a~W. Richa, and Christian Scheideler.
\newblock The canonical {A}moebot model: {A}lgorithms and concurrency control.
\newblock {\em Distributed Comput.}, 36(2):159--192, 2023.

\bibitem{gastineau2022leader}
Nicolas Gastineau, Wahabou Abdou, Nader Mbarek, and Olivier Togni.
\newblock Leader election and local identifiers for three-dimensional programmable matter.
\newblock {\em Concurr. Comput. Pract. Exp.}, 34(7), 2022.

\bibitem{derakhshandeh2015leader}
Zahra Derakhshandeh, Robert Gmyr, Thim Strothmann, Rida~A. Bazzi, Andr{\'{e}}a~W. Richa, and Christian Scheideler.
\newblock Leader election and shape formation with self-organizing programmable matter.
\newblock In {\em {DNA} 2015}, volume 9211 of {\em Lecture Notes in Comput. Sci.}, pages 117--132. Springer, 2015.

\bibitem{awerbuch1994memory}
Baruch Awerbuch and Rafail Ostrovsky.
\newblock Memory-efficient and self-stabilizing network reset.
\newblock In {\em PODC 1994}, pages 254--263. {ACM}, 1994.

\bibitem{itkis1994fast}
Gene Itkis and Leonid Levin.
\newblock Fast and lean self-stabilizing asynchronous protocols.
\newblock In {\em {FOCS} 1994}, pages 226--239. {IEEE} Computer Society, 1994.

\bibitem{daymude2021bio}
Joshua~J. Daymude, Andr{\'{e}}a~W. Richa, and Jamison~W. Weber.
\newblock Bio-inspired energy distribution for programmable matter.
\newblock In {\em {ICDCN} 2021}, pages 86--95. {ACM}, 2021.

\bibitem{yamashita1988computing}
Masafumi Yamashita and Tiko Kameda.
\newblock Computing on an anonymous network.
\newblock In {\em PODC 1988}, pages 117--130. ACM, 1988.

\bibitem{boldi2002universal}
Paolo Boldi and Sebastiano Vigna.
\newblock Universal dynamic synchronous self--stabilization.
\newblock {\em Distributed Comput.}, 15:137--153, 2002.

\end{thebibliography}

\end{document}